\documentclass[a4paper,onecolumn]{quantumarticle}
\pdfoutput=1

\usepackage[utf8]{inputenc}
\usepackage[T1]{fontenc}
\usepackage[english]{babel}

\usepackage{amsmath}
\usepackage{amssymb}
\usepackage{amsthm}
\usepackage{mathtools}
\usepackage{graphicx}
\usepackage{float}
\usepackage{enumitem}
\usepackage{subcaption}
\usepackage{booktabs}
\usepackage{tcolorbox}
\usepackage{tikz-cd}
\usetikzlibrary{positioning,arrows.meta}
\usepackage[normalem]{ulem}
\usepackage{csquotes}
\usepackage{verbatim}

\usepackage{physics}
\usepackage{hyperref}
\hypersetup{
  colorlinks=true,
  linktoc=page,
  linkcolor=blue,
  citecolor=blue,
  urlcolor=blue,
}

\usepackage[numbers]{natbib}

\theoremstyle{plain}
\newtheorem{theorem}{Theorem}[section]
\newtheorem{lemma}[theorem]{Lemma}
\newtheorem*{lemma*}{Lemma}
\newtheorem{proposition}[theorem]{Proposition}

\theoremstyle{definition}
\newtheorem{definition}[theorem]{Definition}
\newtheorem{example}[theorem]{Example}
\theoremstyle{remark}
\newtheorem{remark}[theorem]{Remark}

\definecolor{darkgreen}{rgb}{0,0.5,0}

\begin{document}

\title{Reduced Quantum-Reference-Frame Channels for Open Quantum Systems}
\author{Paolo Luppi}
\email{paolo.luppi@unimi.it}
\affiliation{Dipartimento di Fisica ``Aldo Pontremoli'', Universit\`a degli Studi di Milano, via~Celoria~16, I-20133 Milan, Italy}
\affiliation{INFN, Sezione di Milano, Via Celoria 16, I-20133 Milan, Italy}
\author{Viktoria Kabel}
\email{vkabel@phys.ethz.ch}
    \affiliation{Institute for Theoretical Physics, ETH Z{\"u}rich, 8093 Z{\"u}rich, Switzerland}
\author{Flaminia Giacomini}
\email{flaminia.giacomini@uniroma2.it}
	\affiliation{Institute for Theoretical Physics, ETH Z{\"u}rich, 8093 Z{\"u}rich, Switzerland}
\affiliation{Dipartimento di Fisica, Universit{\`a} di Roma Tor Vergata and Sezione INFN Roma2, Via della Ricerca Scientifica 1, 00133, Roma, Italy}
\author{Andrea Smirne}
\email{andrea.smirne@unimi.it}
\affiliation{Dipartimento di Fisica ``Aldo Pontremoli'', Universit\`a degli Studi di Milano, via~Celoria~16, I-20133 Milan, Italy}
\affiliation{INFN, Sezione di Milano, Via Celoria 16, I-20133 Milan, Italy}

\maketitle
\begin{abstract}
When reference frames are treated quantum mechanically, the subsystem structure
of quantum systems is no longer absolute, but depends on the choice of the quantum
reference frame (QRF). This raises a basic question: which dynamical properties
are preserved across QRFs, and which depend on the physical reference
used to define the system? We study this question in the general setting of
open quantum systems. At the operational level, after a QRF transformation, the
old reference frame and environmental degrees of freedom may be inaccessible
and must therefore be traced out.
This motivates the definition of \emph{reduced quantum-reference-frame
channels}: maps that connect the joint description in one frame to the
accessible subsystem in another. We characterize their symmetry-constrained
structure and
define a regime in which a reduced
entropy--coherence conservation law holds. We also identify when the induced
reduced action on the open system admits a classical interpretation as random
frame misalignment, and when it instead reflects quantum
reduced-frame effects.
We then apply the framework to pure-dephasing dynamics and derive a necessary and sufficient compatibility condition for population
preservation. When the frame symmetry commutes with the open system's free
Hamiltonian, coherences acquire a multiplicative frame factor, so that locally
inferred decoherence rates split into environmental and reference-induced
contributions. Ramsey interferometry gives this split a direct operational meaning.
Finally, a gravity-motivated dephasing model illustrates how degradation of a
phase reference can mimic signatures usually attributed to intrinsic
decoherence mechanisms.
\end{abstract}

\tableofcontents

\section{Introduction}
\label{sec:intro}

The standard treatment of open quantum systems \cite{Breuer2002,Rivas2011,Vacchini2024} is formulated relative to reference structures that define states, observables, control operations, and measurement outcomes. When these  structures are classical or externally specified, their role in defining the open system dynamics is well understood,
for example in the formulation of symmetry-covariant dynamical maps \cite{Holevo1993,Holevo1996,Vacchini2010,Gasbarri2017,Cattaneo2025}. Once reference systems are themselves treated as quantum objects, however, this separation becomes nontrivial. A basic question then arises: when is the qualitative structure of a reduced dynamics an intrinsic property of the underlying interaction, and when does it depend on the physical reference used to define and probe the system?

Quantum reference frames (QRFs) provide a natural setting in which to address this question. There are several different approaches that model explicitly the quantum properties of the reference frame itself \cite{Bartlett2007,Giacomini2019,Vanrietvelde_2018a, delaHamette_2021_perspectiveneutral, Castro_Ruiz_2021, Loveridge_2016,Carette_2023}. In the following, we focus on the \emph{perspectival approach} to QRFs \cite{Giacomini2019, delaHamette_2020}
\footnote{This approach has been shown to be equivalent to the \emph{perspective-neutral approach} in the case of \emph{ideal} QRFs \cite{Vanrietvelde_2018a, delaHamette_2021_perspectiveneutral}. Here, we restrict our attention to ideal frames.}, in which
 one starts from the description of some external systems relative to one physical system -- the initial reference frame -- and changes into the description relative to another physical system -- the final reference frame -- via a unitary transformation. The reference frame does not describe itself in its own perspective, meaning that it is not assigned a Hilbert space. Hence, the Hilbert spaces of the systems of interest and reference frames correspond to the relational degrees of freedom between the external systems and the reference frame. Under such a change of QRF, properties like superposition, entanglement, and coherence become frame-dependent. In particular, it has been established that the subsystem structure itself depends on the choice of QRF \cite{Giacomini2019, Vanrietvelde_2018a, Castro_Ruiz_2021,hoehn2021quantum,delaHamette_2021_perspectiveneutral,Hoehn2023}. 

While much of the existing literature studies QRF transformations and their consequences at the level of the full relational state, in many realistic settings an observer has operational access only to the system of interest, and not to the other reference frames or to the surrounding environment. After a change of frame, the original reference becomes part of the external degrees of freedom in the new description and, if it is not accessible, it must be discarded together with the environment, leaving an effective reduced description of the system. Along these lines, on the quantum-information side, \cite{Bartlett2006} considers the degradation of a QRF under repeated measurements, which could be understood as an idealized and discrete version of the interaction with an environment. Operational changes between finite-resolution quantum reference frames were shown in ~\cite{ChangingQRF} to induce an effective random-unitary decoherence map in a static, single-frame-change scenario. Within the perspectival approach, \cite{Tuziemski2020} discusses the frame-dependence of decoherence  in the context of a QRF for the Galilei group, while \cite{Le2020} considers its implications for quantum Darwinism. In the context of the perspective-neutral approach, a more general analysis of reduced subsystem dynamics was undertaken in \cite{Hoehn2023}, with a focus on the thermodynamic consequences and, in particular, the frame-dependence of entropy. The latter has received much further attention in the context of conservation laws for QRFs \cite{Cepollaro, delaHamette_entropy_2026} as well as in the context of the observer-dependence of gravitational entropy \cite{Witten_2021,Chandrasekaran_2022,DeVuyst_2024,DeVuyst_2025}. 

Taken together, these works show how different quantities and phenomena, such as reduced states, entropies, thermodynamic quantities, and decoherence, can be frame-dependent. What remains less understood is the structure of the reduced transformation responsible for these effects. In particular, once the original reference frame and environmental degrees of freedom are inaccessible, the relevant object is not the global QRF unitary itself, but the effective channel induced on the accessible subsystem. Since this channel is obtained by a unitary QRF transformation followed by a partial trace, it is not an arbitrary reduced channel: its structure is constrained by the symmetry representation defining the frame change, by the state of the reference, and by the way system, reference, and environment are repartitioned across perspectives. We explicitly define the channel relating the joint description in one reference frame to the reduced open system description in the other, and name it  \emph{reduced QRF channel}. Characterizing its structure allows us to identify which qualitative features of an open-system dynamics are preserved, modified, or induced by a change of QRF.

We first identify a
hierarchy of conditions of the reduced QRF channels that preserve the block structure, associated for example with the energy eigenspaces of the system. Such conditions determine when populations and
coherences remain separated across frames; in particular, a strong form of block-preservation
yields a reduced-level conservation law that provides a channel-theoretic
counterpart of recent coherence--entanglement entropy trade-offs
\cite{Cepollaro,delaHamette_entropy_2026}. Beyond this kinematic level, we
characterize when the induced reduced transformation admits a classical interpretation
as random frame misalignment, and when it displays non-classical
signatures. Since classical misalignment maps are unital, non-unitality of a
transformed reduced dynamics provides an operational witness for non-classical
reduced QRF effects, whenever the original dynamics is known to be unital.

We then turn to pure dephasing, the paradigmatic decoherence model in which
populations are preserved while coherences decay. Pure dephasing is
experimentally relevant in many platforms
\cite{Yoshihara2006,Kwiatkowski2018,Mercurio2023,Yuan2022} and arises naturally
in gravity-motivated decoherence models
\cite{Blencowe2013,Pikovski2015,Bassi2017GravitationalDecoherence,Fahn2026Generalising}. We show that kinematic block-preservation
alone does not suffice to preserve the population structure of pure dephasing
under a frame change. The missing ingredient is a dynamical compatibility
condition between the reduced QRF channel and the controlled evolution
generating dephasing in the original frame, and we prove that this condition is
necessary and sufficient for the preservation of a fixed population
decomposition.

Finally, we show that when the frame-change
symmetry preserves the system's energy structure, transformed coherences
acquire a multiplicative frame factor, and the locally inferred
decoherence rate splits additively into an environmental contribution and a
reference-induced one. Ramsey interferometry gives this decomposition a direct
operational meaning: the same physical interferometer yields frame-invariant
statistics when preparation and measurement are transformed consistently,
whereas local Ramsey experiments tied to different phase standards can infer
different decoherence rates. We illustrate this mechanism in a gravity-motivated
pure-dephasing model with a noisy phase reference, where environment-induced
decoherence can be separated from decreases of visibility due to the degradation of the phase reference.

The remainder of the paper is organized as follows.
Section~\ref{sec:Background} introduces the open-system setting and the QRF framework.
Section~\ref{sec:qrf_preserving} defines and characterizes the structure of reduced QRF channels, including block-preservation conditions and the classical-vs-quantum aspects.
Section~\ref{sec:gpreserving_pd} establishes the dynamical compatibility criterion for the preservation of the population structure of pure-dephasing dynamics.
Section~\ref{sec:Ramsey} gives the operational interpretation in terms of Ramsey interferometry, and Section~\ref{sec:Blencowe} applies the framework to a gravity-motivated pure-dephasing model with a noisy phase reference.
Finally, we conclude in Section~\ref{sec:conclusions} with a discussion of our results and open directions.

\section{Background and Framework}
\label{sec:Background}
\subsection{Open quantum systems}
\label{sec:oqs}

The physical systems we aim to describe are never perfectly isolated in practice \cite{Breuer2002}: they interact with external degrees of freedom that may be inaccessible or only partially controlled. At the same time, QRFs are themselves physical systems, and can therefore interact with the systems under consideration, become correlated with their environments, or possess inaccessible degrees of freedom of their own. To analyze how QRF transformations affect the open-system description, and in particular decoherence processes, we first fix notation and recall the standard open-system framework, with emphasis on pure dephasing.

Consider a quantum system $S$ with Hilbert space $\mathcal H_S$ and free Hamiltonian $H_S$,
coupled to an environment $E$ with Hilbert space $\mathcal H_E$ and Hamiltonian $H_E$
through an interaction Hamiltonian $H_I$ acting on $\mathcal H_S\otimes\mathcal H_E$.
The total Hamiltonian is
\begin{equation}
  H = H_S\otimes\mathbb{I}_E + \mathbb{I}_S \otimes H_E + H_I .
  \label{eq:total_hamiltonian}
\end{equation}
Throughout, we assume $H$ to be time independent and we will often omit the identity
operators $\mathbb{I}_S$ and $\mathbb{I}_E$ when no confusion can arise. In addition, we assume that the system Hamiltonian $H_S$ has a pure-point spectrum,
with a (finite or countably infinite) orthonormal eigenbasis $\{\lvert n\rangle\}$.
Quantum states are represented by statistical operators, i.e. positive, self-adjoint, trace-class operators of unit trace on the corresponding Hilbert space.
\\The joint $S$--$E$ state evolves unitarily as $\rho_{SE}(t)=U(t)\rho_{SE}(0)U^\dagger(t)$
with $U(t)=e^{-iHt/\hbar}$, while the open-system state, also called reduced state, is obtained via the partial trace
over the environmental degrees of freedom, according to
\begin{equation}
  \rho_S(t) = \Tr_E\!\big[\rho_{SE}(t)\big].
\end{equation}
For factorized initial states $\rho_{SE}(0)=\rho_S(0)\otimes\rho_E(0)$, and a fixed initial environmental state $\rho_E(0)$, the
evolution of $S$ is described by a quantum dynamical map $\Phi_t$ that assigns to any initial system state $\rho_S(0)$ the corresponding state at time $t$:
\begin{equation}
  \rho_S(t) = \Phi_t[\rho_S(0)] :=
  \Tr_E\!\Big[U(t)\big(\rho_S(0)\otimes\rho_E(0)\big)U^\dagger(t)\Big].
  \label{eq:dynamical_map}
\end{equation}
Indeed, for given $U(t)$ and $\rho_E(0)$, the same map $\Phi_t$ applies to all initial open-system states, and, by linearity, 
to any trace-class operator on $\mathcal{H}_S$, $\rho_S \in \mathcal T(\mathcal H_S)$.
The map $\Phi_t$ is completely positive and trace-preserving (CPTP)\footnote{Complete positivity ensures positivity under extension by an arbitrary ancillary system; see, e.g., Refs.~\cite{Breuer2002,Benatti2005}.}, i.e., it is a quantum channel.
For a system of finite dimension $d$, we will say that the CPTP map $\Phi_t$ is unital if it preserves the identity operator,
\begin{equation}
\Phi_t[\mathbb I]=\mathbb I;
\end{equation}
equivalently, $\Phi_t$ leaves the maximally mixed state invariant, $\Phi_t(\mathbb I/d)=\mathbb I/d$.

In the following, given a map $\Omega:$ $\mathcal{T}(\mathcal H) \rightarrow \mathcal{T}(\mathcal H')$, with $\mathcal{H}$ 
and $\mathcal{H}'$ two possibly different Hilbert spaces, we will often refer to its dual map 
$\Omega^\dagger:$ $\mathcal{B}(\mathcal{H}') \rightarrow\mathcal{B}(\mathcal{H})$ -- with $\mathcal{B}(\mathcal{H})$ 
the set of bounded linear operators on $\mathcal{H}$ -- via the relation
\begin{equation}
\label{eq:dualmap}
    \Tr[X\,\Omega[\rho]]=\Tr[\Omega^\dagger[X]\,\rho] \quad 
    \forall \rho \in \mathcal{T}(\mathcal{H})\,\,\forall X \in \mathcal{B}(\mathcal{H}').
\end{equation}

\subsubsection*{Pure dephasing}
\label{subsec:pure_dephasing}
Pure dephasing is a paradigmatic model of decoherence, since it corresponds to coherence loss without population transfer \cite{Breuer2002,Roszak2019EntanglementObjectivity}.
Given the total Hamiltonian in Eq.(\ref{eq:total_hamiltonian}),
let $H_S=\sum_n \lambda_n P_n$ be the spectral decomposition of the free system Hamiltonian into orthogonal
projectors $P_n$ (of rank $d_n$; in the non-degenerate case $P_n=\ket n\!\bra n$).
We say that the reduced dynamics is of \emph{pure-dephasing type} in the energy
basis if $H_S$ is invariant under the global Hamiltonian evolution, that is
\begin{equation}
  [H_S,H]=[H_S,H_I]=0.
  \label{eq:microscopic_condition}
\end{equation}
In the non-degenerate case, this means that the global unitary admits a controlled form
\begin{equation}
  U(t)=\sum_n P_n\otimes U_n(t),
  \label{eq:controlled_unitary}
\end{equation}
where each $U_n(t)$ acts on $E$ conditioned on $S$ being in the $n$-th energy
block.\footnote{For degenerate levels, pure dephasing preserves block populations but may allow nontrivial intra-block dynamics.}
Since the global evolution preserves each eigenspace of $H_S$, it immediately
follows that the energy-block populations are time-independent for every initial
state,
\begin{equation}
  p_n(t):=\Tr[P_n\,\rho_S(t)]=\Tr[P_n\,\rho_S(0)]= p_n(0)
  \quad \forall n,\ t\ge 0,\ \forall \rho_S(0),
  \label{eq:populations_constant}
\end{equation}  
or, equivalently, that the energy projectors are fixed points of the dual reduced dynamical map,  
\begin{equation}
  \Phi_t^\dagger[P_n]=P_n \quad \forall n,\ \forall t\ge 0;
  \label{eq:pure_dephasing_heisenberg}
\end{equation}
moreover, off-diagonal terms (coherences) decay multiplicatively, which in the non-degenerate case simply reads
\begin{equation}
  \rho_{nm}(t)=\bra n\rho_S(t)\ket m = f_{nm}(t)\,\rho_{nm}(0)\quad (n\neq m),
  \label{eq:coherence_decay}
\end{equation}
with the coherence factors 
\begin{equation}
    f_{nm}(t)= \Tr_E[U_n(t)\rho_E(0)U_m^{\dagger}(t)]
\end{equation}
that satisfy $f_{nm}(0)=1$, $f_{mn}(t)=f_{nm}^*(t)$ and
$|f_{nm}(t)|\le 1$. Indeed, the decay of the coherences can be characterized by the (generally time-dependent) decoherence rates
\begin{equation}
  \gamma_{nm}(t):=-\frac{d}{dt}\ln|f_{nm}(t)|
  \ =\ -\frac{d}{dt}\ln|\rho_{nm}(t)|,
  \quad n\neq m.
  \label{eq:decoherence_rate}
\end{equation}

\subsection{Quantum reference frames}
\label{subsec:qrf_setup}

The main idea behind QRFs is to treat the reference systems as quantum systems. Reference frames play an important role whenever the theory under consideration has a symmetry. One of the simplest examples is translation symmetry: in a translationally invariant theory, only relative positions are physically meaningful. Choosing a reference frame then amounts to identifying the origin of the coordinate system with a physical system, so that the coordinates of all other systems are defined relative to it. When studying QRFs, one treats this reference system quantum mechanically and, in particular, takes into account the possibility of different reference frames to be in superposition relative to one another or entangled with the other systems.

This is relevant in quantum information and quantum foundations \cite{Aharonov1967,AharonovKaufherr1984,Bartlett2006,Bartlett2007,Angelo2011,Angelo2012}, where reference systems such as clocks, phase standards, or orientation frames may need to be treated as physical quantum systems rather than as ideal external backgrounds. It becomes especially natural in relational or background-free settings, and in particular in quantum gravity, where one cannot in general rely on a fixed external spacetime structure and physical quantities must be defined relative to other systems (e.g.~\cite{Rovelli1991, Tambornino2011}).

Several approaches to QRFs have been developed in the literature. The modern approaches include in particular the perspectival \cite{Giacomini2019, delaHamette_2020}, perspective-neutral \cite{Vanrietvelde_2018a, delaHamette_2021_perspectiveneutral}, extra-particle \cite{Castro_Ruiz_2021,Garmier_2025}, and operational approach \cite{Loveridge_2016,Carette_2023}. In this work we adopt the perspectival approach of Refs.~\cite{Giacomini2019,delaHamette_2020}, which is well suited to the reduced-channel viewpoint developed here.

Let us now briefly review the formalism of QRFs in the perspectival approach. To simplify the construction, we focus on the case of three quantum systems: $A$ and $B$, which will take the role of the reference frames, and an additional system $S$. Moreover, we assume that the symmetry is given by a locally compact group $G$.\footnote{Local compactness ensures the existence
of a left Haar measure $dg$ on $G$, which is needed to define the integral form
of the QRF transformation \cite{delaHamette_2020}.} In the perspectival approach, one is only ever concerned with the perspectives relative to a given reference frame. We thus start with the description relative to system $A$. This description contains only the systems external to $A$: the quantum state of $S$ and $B$ relative to $A$ is given by a state
\begin{equation}
\ket{\psi}^{(A)}_{SB} \in \mathcal H^{(A)}_{SB}
=\mathcal H_S^{(A)}\otimes\mathcal H_B^{(A)},
\end{equation}
where the superscript $(A)$ labels the perspective. The group $G$ acts on the Hilbert space $\mathcal H^{(A)}_{SB}$ via the unitary representation $V_S(g)\otimes V_B(g)$. We are assuming \emph{ideal QRFs}, which have Hilbert space $\mathcal H_B^{(A)} = \mathcal H_A^{(B)} = L^2(G)$ and carry the left regular representation of $G$. This ensures that the basis states of the QRF $\{\ket{g}\}_{g\in G}$ are orthogonal, $\braket{h}{g}=\delta(h^{-1}g)$, and can thus be distinguished perfectly.

The perspectives relative to two different QRFs are related by a unitary transformation. In order to change into the description relative to $B$, we apply the \emph{QRF transformation} \cite{delaHamette_2020}
\begin{equation}
    \hat{S}_{A \to B}= \int_G dg\, \ket{g^{-1}}_A\!\bra{g}_B\otimes V^{\dagger}_{S}(g),\label{eq:QRF_transformation}
\end{equation}
where $dg$ is the Haar measure of the group $G$. This defines a unitary map between \emph{different} perspectival Hilbert spaces,
\begin{align}
    \hat{\mathcal S}_{A\to B}: \mathcal H^{(A)}_{SB} \longrightarrow \mathcal H^{(B)}_{SA} = \mathcal H_S^{(B)}\otimes\mathcal H_A^{(B)}.
\end{align}

The state of $A$ and $S$ relative to $B$ is thus given by $\ket{\psi}^{(B)}_{SA}=\hat{S}_{A\to B}\ket{\psi}^{(A)}_{SB}$. The unitarity of the transformation ensures that observable quantities, such as probabilities and expectation values, are conserved. However, the transformation $\hat{S}_{A\to B}$ acts as a change of basis in the total Hilbert space, which can lead to a different partitioning into subsystems. Formally this means that, while the Hilbert spaces $\mathcal H^{(A)}_{SB}$ and $\mathcal H^{(B)}_{SA}$ are isomorphic, they are, in general, equipped with different tensor-product structures \cite{hoehn2021quantum,Hoehn2023}. This \emph{subsystem relativity} under changes of QRF has important implications; most notably, it leads to the \emph{frame-dependence of entanglement}.

\begin{example}[$\mathbb{Z}_2$-Symmetry]
As a simple example, consider a system of two qubits $S$ and $B$ relative to $A$ and the discrete symmetry group $G=\mathbb{Z}_2$ \cite{delaHamette_2020, Cepollaro}. In this case, the QRF transformation between $A$ and $B$ is given by
\begin{equation}
\hat{S}_{A\to B}=\ket0_A\!\bra0_B\otimes\mathbb I_S+\ket1_A\!\bra1_B\otimes\sigma_x,
\label{eq:controlled_sigma_x}
\end{equation}
which is the discrete version of Eq.~\eqref{eq:QRF_transformation} with $V_S(e)=\mathbb I$ and $V_S(x)=\sigma_x$. Consider as a concrete example the following state of $S$ and $B$ relative to $A$:
\begin{equation}
    \ket{\psi}^{(A)}_{SB} = \frac{1}{\sqrt{2}}\ket{0}_S\left(\ket{0}_B+ \ket{1}_B\right).
\end{equation}
Applying the QRF transformation $\hat{S}_{A\to B}$, we obtain the state
\begin{equation}
    \ket{\psi}^{(B)}_{SA} = \frac{1}{\sqrt{2}}\left(\ket{00}_{SA}+ \ket{11}_{SA}\right)
\end{equation}
of $S$ and $A$ relative to $B$. The QRF transformation re-orients the spin of $S$, depending on the orientation of the new reference frame $B$. Note that while the state relative to $A$ is a product state, the state relative to $B$ is entangled.
\end{example}

More generally, subsystem relativity implies that properties often regarded as intrinsic to a subsystem --- including coherence, entanglement entropy, decoherence behavior, thermodynamic quantities such as heat and work, and even the distinction between open and closed subsystems --- may become frame dependent under a QRF transformation \cite{Cepollaro,DeVuyst_2024,Tuziemski2020,Hoehn2023}. This naturally raises the question of how the qualitative structure of reduced open-system dynamics transforms under a change of quantum reference frame.

\section{Reduced QRF channels: definition, structure, and properties}
\label{sec:qrf_preserving}

We now introduce the central object of this work: the \emph{reduced QRF channel}. Once inaccessible degrees of freedom, associated to reference systems or the environment, are discarded, a change of QRF induces a reduced CPTP map to the accessible system. 
The resulting structure determines which qualitative features of open-system dynamics are preserved under a change of frame. In this section we identify those constraints, distinguish classically interpretable reduced transformations from quantum ones, and characterize the population--coherence structures that reduced QRF channels can preserve.

To capture a typical open-quantum-systems setting, we include an environment $E$ in the perspective of $A$, in addition to the system $S$ and the alternative reference frame $B$. The total Hilbert space is therefore
\begin{equation}
\mathcal H^{(A)}_{SBE}
=
\mathcal H_S^{(A)}\otimes \mathcal H_B^{(A)}\otimes \mathcal H_E^{(A)},
\end{equation}
and the joint state evolves unitarily according to
\begin{equation}
\rho^{(A)}_{SBE}(t)
=
U^{(A)}_{SBE}(t)\,\rho^{(A)}_{SBE}(0)\,U^{(A)\dagger}_{SBE}(t).
\label{eq:global_evolution_A}
\end{equation}
In the presence of the environment, we use the natural extension of the QRF transformation to the joint degrees of freedom $S$ and $E$, so that
\begin{equation}\label{eq:aux1}
\hat S_{A\to B}
=
\int_G dg\, \ket{g^{-1}}_A\!\bra{g}_B \otimes V^{\dagger}_{SE}(g),
\end{equation}
where $V_{SE}(g)$ is a unitary representation on
$\mathcal H_S^{(A)}\otimes\mathcal H_E^{(A)}$. In the following, we take the QRF transformation \(\hat S_{A\to B}\) to be time independent. Note, however, that the state $\rho_B(t)$ of the reference frame may still evolve in time and thus lead to a non-trivial time dependence on the reduced level.\footnote{\label{fn:time-dependent-QRF-trafo}An example for an explicitly time dependent QRF transformation is the transformation for the extended Galilei group \cite{Giacomini2019}. More generally, a time-dependent
transformation \(\hat S_{A\to B}(t)\) would define a family of reduced channels
\(\mathcal Q_t[\rho]=
\Tr_{AE}[\hat S_{A\to B}(t)\rho\hat S_{A\to B}^\dagger(t)]\). Such a case is briefly discussed below.}
Applying this transformation yields the state relative to $B$,
\begin{equation}
\rho^{(B)}_{SAE}(t)
=
\hat S_{A\to B}\,\rho^{(A)}_{SBE}(t)\,\hat S_{A\to B}^\dagger .
\label{eq:state_transformation}
\end{equation}
If, in the reference frame of $B$, only observables on $S$ are accessible, the operationally relevant state is obtained by tracing out the inaccessible degrees of freedom $(A,E)$. This motivates the following definition.

\begin{definition}[Reduced QRF channel]
\label{def:reduced_qrf_channel}
Let $\hat S_{A\to B}:\mathcal H^{(A)}_{SBE}\to \mathcal H^{(B)}_{SAE}$ be the unitary QRF transformation from the perspective of $A$ to that of $B$ defined in Eq.(\ref{eq:aux1}). The \emph{reduced QRF channel} is the map
\begin{equation}
\mathcal Q:\mathcal T(\mathcal H^{(A)}_{SBE})\to \mathcal T(\mathcal H^{(B)}_S),
\qquad
\mathcal Q[\rho]
:=
\Tr_{AE}\!\big[\hat S_{A\to B}\,\rho\,\hat S_{A\to B}^\dagger\big].
\label{eq:reduced_quantumchannel_rewrite}
\end{equation}
\end{definition}
Since $\mathcal Q$ is the composition of a unitary conjugation and a partial trace, it is CPTP. Crucially, however, $\mathcal Q$ is \emph{not} by itself a dynamical map on $S$ alone: it acts on states on the joint degrees of freedom $SBE$ in the $A$-description and returns the reduced system state in the $B$-description. In particular,
\begin{equation}
\rho_S^{(B)}(t)
=
\mathcal Q\!\big[\rho_{SBE}^{(A)}(t)\big]
=
\Tr_{AE}\!\big[\hat S_{A\to B}\,\rho_{SBE}^{(A)}(t)\,\hat S_{A\to B}^\dagger\big].
\label{eq:system_state_frame_B_rewrite}
\end{equation}
A system map on $S$ alone arises only after fixing a factorized initial
preparation in frame $A$,
$\rho_{SBE}^{(A)}(0)=\rho_S^{(A)}(0)\otimes \rho_{BE}^{(A)}(0)$.
In that case, combining $\mathcal Q$ with the joint unitary dynamics induces a
CPTP map
$\Lambda_t:\mathcal T(\mathcal H_S^{(A)})\to
\mathcal T(\mathcal H_S^{(B)})$
such that
$\rho_S^{(B)}(t)=\Lambda_t(\rho_S^{(A)}(0))$.
The relation between $\mathcal Q$ and $\Lambda_t$ is summarized in the following diagram
\begin{equation}
\centering
\begin{tikzcd}[column sep=huge, row sep=huge]
\rho_{SBE}^{(A)}(0)
  \arrow[r, "{\mathcal{U}^{(A)}_{SBE}(t)}"]
  \arrow[d, "{\mathrm{Tr}_{BE}}"'] &
\rho_{SBE}^{(A)}(t)
  \arrow[d, "{\mathcal{Q}}"] \\
\rho_{S}^{(A)}(0)
  \arrow[r,dashed, "{\Lambda_t}"'] &
\rho_{S}^{(B)}(t)
\end{tikzcd}
\label{fig:qrf_commutative}
\end{equation}
where the bottom arrow is defined after fixing the factorized preparation
$\rho_{SBE}^{(A)}(0)=
\rho_S^{(A)}(0)\otimes\rho_{BE}^{(A)}(0)$,
with fixed \(\rho_{BE}^{(A)}(0)\).
Thus, \(\mathcal Q\) is the preparation-independent channel associated with the
reduced frame change, whereas \(\Lambda_t\) is a preparation-dependent induced
open-system map. Since \(\Lambda_t\) is not defined canonically without a choice
of initial assignment, we formulate the structural constraints at the level of
\(\mathcal Q\) and use \(\Lambda_t\) only after such an assignment has been fixed.

\begin{example}
\label{ex:reducedChannel}
    Let $B$ be a qubit reference frame with pointer basis $\{\ket{0}_B,\ket{1}_B\}$, and consider the discrete $\mathbb{Z}_2$ QRF transformation
\begin{equation}
\hat S_{A\to B}
=
\ket{0}_A\!\bra{0}_B \otimes \mathbb{I}_S \otimes \mathbb{I}_E
+
\ket{1}_A\!\bra{1}_B \otimes \sigma_x^{(S)} \otimes V^{\dagger}_E,
\label{eq:z2_qrf_transformation_block_example}
\end{equation}
where $V_E$ is an arbitrary unitary acting on $E$ such that
$V_E^2=\mathbb I_E$. This is the $\mathbb{Z}_2$
instance of a factorized representation on $SE$, with
\begin{equation}
V^{\dagger}_{SE}(0)=\mathbb{I}_S\otimes\mathbb{I}_E,
\qquad
V^{\dagger}_{SE}(1)=\sigma_x^{(S)}\otimes V^{\dagger}_E.
\end{equation}
Assume that, relative to $A$, the state of $S$, $B$, and $E$ evolves as a product state: $\rho_{SBE}^{(A)}(t) = \rho_{SE}^{(A)}(t)\otimes \rho_B^{(A)}(t)$, i.e., there is no interaction between $B$ and $SE$. In this case, the reduced state relative to $B$, obtained by applying the reduced QRF channel $\mathcal{Q}$, is
\begin{align}
    \rho_S^{(B)}(t) = \mathcal{Q}[\rho_{SBE}^{(A)}(t)] &= \underset{p_0(t)}{\underbrace{\expval{\rho_B^{(A)}(t)}{0}}} \mathrm{Tr}_E[\rho_{SE}^{(A)}(t)]+\underset{p_1(t)}{\underbrace{\expval{\rho_B^{(A)}(t)}{1}}}\mathrm{Tr}_E[(\sigma_x\otimes V^{\dagger}_E)\rho_{SE}^{(A)}(t)(\sigma_x\otimes V_E)]\nonumber\\
    &= p_0(t) \rho_S^{(A)}(t) + p_1(t)\sigma_x\rho_S^{(A)}(t)\sigma_x.
\end{align}
Note that this is an example of a QRF transformation, which can be understood in terms of a classical mixture of reference frame orientations at the reduced level. We discuss the general conditions for this behavior in Sec.~\ref{subsec:Quantumvsclassical} below.
\end{example}

\subsection{Preserving the population--coherence structure}
\label{sec:block-preserving}
We now investigate how block structures associated, for example, with energy eigenspaces, pointer
sectors, or symmetry sectors are affected by a reduced QRF transformation, characterizing in particular the conditions under which the map $\mathcal{Q}$ preserves a given block structure of the system $S$.

We fix a projective decomposition (PVM) on the system Hilbert space in frame~$A$,
\begin{equation}\label{eq:PVM-A}
  \{P_m^{(A)}\}_m \subset \mathcal B(\mathcal H_S^{(A)}),
  \qquad
(P_m^{(A)})^{\dagger} = P_m^{(A)}
  \qquad
  P_m^{(A)} P_{m'}^{(A)}=\delta_{mm'}P_m^{(A)},
  \qquad
  \sum_m P_m^{(A)}=\mathbb I.
\end{equation}
Analogously, we fix a PVM
$\{\widetilde P_n^{(B)}\}_n$ on $\mathcal H_S^{(B)}$. The preservation
condition is formulated directly at the level of the reduced QRF channel.

\begin{definition}[Block-preserving reduced QRF channel]
\label{def:block_preserving}
Fix a PVM $\{P_m^{(A)}\}_m$ on $\mathcal H_S^{(A)}$ and a PVM
$\{\widetilde P_n^{(B)}\}_n$ on $\mathcal H_S^{(B)}$. Let $\mathcal Q$ be a
reduced QRF channel and let
$\mathcal Q^\dagger$ denote its dual map. We say that $\mathcal Q$ is
\emph{block-preserving} with respect to these two PVMs if, for every $n$,
\begin{equation}
\mathcal Q^\dagger\!\big[\widetilde P_n^{(B)}\big]
= \sum_m P_m^{(A)} \otimes \Pi_{nm}^{(A)},
\label{eq:block_preserving_def_only}
\end{equation}
where the operators
$\{\Pi_{nm}^{(A)}\}_{n,m}\subset \mathcal B(\mathcal H_{BE}^{(A)})$
satisfy
\begin{equation}
\Pi_{nm}^{(A)}\ge 0 \quad \forall n,m,
\qquad
\sum_n \Pi_{nm}^{(A)}=\mathbb I_{BE}^{(A)} \quad \forall m .
\label{eq:block_preserving_povm_only}
\end{equation}
Equivalently, for each input block $m$, the family
$\{\Pi_{nm}^{(A)}\}_n$ forms a POVM on the inaccessible degrees of freedom
$BE$.
\end{definition}
At the formal level, the two PVMs entering
Definition~\ref{def:block_preserving} are arbitrary: block-preservation is a
relation between a reduced QRF channel and two chosen block decompositions, one
in each frame. In the physical scenarios of interest, however, these decompositions are
typically not chosen independently. Rather, they are selected by the same
physical or operational criterion in the two descriptions, such as an energy
decomposition, a pointer observable, or a symmetry-sector decomposition. For
example, if a meaningful local energy observable is available in both frames,
one may take $\{P_m^{(A)}\}_m$ and
$\{\widetilde P_n^{(B)}\}_n$ to be the corresponding spectral projectors.
More generally, because a QRF transformation can reshuffle the tensor-product
structure, the transformed Hamiltonian need not determine a canonical local
Hamiltonian acting on $S$ in the new frame. The PVM
$\{\widetilde P_n^{(B)}\}_n$ should therefore be understood in general as a
physically or operationally specified block decomposition in the
$B$-description.

\begin{figure}[H]
\centering
\begin{tikzpicture}[scale=1.2, every node/.style={font=\small}]

\node at (-3,3) {\textbf{Frame A}};
\node at (-3,2.7) {$S$};

\foreach \m/\y in {0/2, 1/1, 2/0} {
  \node[draw,circle,fill=orange!30] (A\m) at (-3,\y) {$P^{(A)}_{\m}$};
}

\node at (3,3) {\textbf{Frame B}};
\node at (3,2.7) {$S$};

\foreach \n/\y in {0/2, 1/1, 2/0} {
  \node[draw,circle,fill=blue!25] (B\n) at (3,\y) {$\widetilde P^{(B)}_{\n}$};}

\draw[->,thick] (A0) to[bend left=10]  node[pos=0.5,above]      {$\Pi_{0,0}$} (B0);
\draw[->,thick] (A0) to[bend right=22] node[pos=0.3,below]      {$\Pi_{2,0}$} (B2);

\draw[->,thick] (A1) to[bend left=6]
      node[pos=0.5, above=2pt] {$\Pi_{1,1}$} (B1);

\draw[->,thick] (A2) to[bend right=10]
      node[pos=0.8, below=1pt] {$\Pi_{1,2}$} (B1);
\draw[->,thick] (A2) to[bend right=10] node[pos=0.5,below]      {$\Pi_{2,2}$} (B2);

\node[align=center] at (0,-1.55) {
  \(\displaystyle p^{(B)}_{n}
   =\mathrm{Tr}\!\left[\widetilde P^{(B)}_{n}\, \mathcal{Q}(\rho^{(A)}_{SBE})\right]
   =\sum_{m}\mathrm{Tr}\!\left[\bigl(P^{(A)}_{m}\!\otimes\!\Pi_{nm}\bigr)\,\rho^{(A)}_{SBE}\right]\)\\  
};
\end{tikzpicture}
\caption{\textit{Block preservation. A block population \(p_n^{(B)}\) in frame \(B\)
may receive contributions from different blocks \(m\) of \(S\) in frame \(A\),
with the dependence on inaccessible degrees of freedom encoded in the effects
\(\Pi_{nm}\). Inter-block coherences of \(S\) in frame \(A\) do not contribute
to these populations. A one-to-one permutation correspondence between input and
output blocks gives the stronger condition of strong block-preservation. Edges
are illustrative.}
}
\label{fig:block_preserving}
\end{figure}

Equation~\eqref{eq:block_preserving_def_only} states that each population projector $\widetilde{P}_n^{(B)}$ in frame $B$ is mapped by $\mathcal Q^\dagger$ to an operator that is
block-diagonal with respect to $\{P_m^{(A)}\}$ while the operators $\Pi_{nm}^{(A)}$ encode a potential dependence on the inaccessible degrees of freedom, $B$ and $E$. As a consequence, block populations in the $B$-description are insensitive to
inter-block coherences of $S$ in the $A$-description.  This suggests a simple operational reading: the block populations of $S$ relative to $B$ are unaffected if one erases all the inter-block coherences of $S$ in the $A$-description, as long as one leaves $B$ and $E$ untouched. 

 We can formalize this in terms of pinching (full-dephasing) maps associated with the relevant projective decompositions
\cite{Hayashi2017}.

\begin{definition}[Pinching (full-dephasing) maps]\label{def:pinching}
Given a PVM $\{\widetilde{P}_n^{(B)}\}_n$ on $\mathcal H_S^{(B)}$ and a PVM
$\{P_m^{(A)}\}_m$ on $\mathcal H_S^{(A)}$, define
\begin{equation}
\widetilde \Delta_{S}^{(B)}:\mathcal T(\mathcal H_S^{(B)})\to\mathcal T(\mathcal H_S^{(B)}),
\qquad
\widetilde\Delta_S^{(B)}(Y):=\sum_n \widetilde P_n^{(B)}\,Y\,\widetilde P_n^{(B)},
\end{equation}
and
\begin{equation}
\Delta_{SBE}^{(A)}:\mathcal T(\mathcal H^{(A)}_{SBE})\to\mathcal T(\mathcal H^{(A)}_{SBE}),
\qquad
\Delta_{SBE}^{(A)}(X):=\sum_m (P^{(A)}_m\!\otimes\!\mathbb I_{BE})\,X\,(P^{(A)}_m\!\otimes\!\mathbb I_{BE}).
\end{equation}
Both maps are completely positive, unital, and idempotent; they cancel coherences between different blocks
defined, respectively, by $\{\widetilde P_n^{(B)}\}$ and by $\{P_m^{(A)}\otimes\mathbb I_{BE}\}$.
\end{definition}

We can now state a necessary and sufficient condition for block preservation in the non-degenerate case, together with a stronger sufficient condition.
\begin{lemma}[Pinching form of block-preservation, rank-$1$ case]\label{lem:pinching}
Assume that both projective decompositions are non-degenerate, i.e.
\(\operatorname{rank} P_m^{(A)}=\operatorname{rank}\widetilde P_n^{(B)}=1\)
for all \(m,n\). Then $\mathcal Q$ is block-preserving  with respect to
\(\{P_m^{(A)}\}_m\) and \(\{\widetilde P_n^{(B)}\}_n\) if and only if 
\begin{equation}
\widetilde\Delta^{(B)}_{S}\circ\mathcal Q
=
\widetilde\Delta^{(B)}_{S}\circ\mathcal Q\circ\Delta_{SBE}^{(A)}.
\label{eq:diagonal_pinching_intertwining}
\end{equation}
Moreover, still assuming rank 1 of the projectors, if
\begin{equation}
\widetilde\Delta^{(B)}_{S}\circ\mathcal Q
=
\mathcal Q\circ\Delta_{SBE}^{(A)}
\label{eq:full_pinching_intertwining}
\end{equation}
then $\mathcal Q$ is block-preserving.
\end{lemma}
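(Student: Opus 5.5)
We work throughout in the Heisenberg picture and translate \eqref{eq:diagonal_pinching_intertwining} into a statement about $\mathcal Q^\dagger$. Both pinching maps are self-dual with respect to the trace pairing \eqref{eq:dualmap}: $(\widetilde\Delta^{(B)}_S)^\dagger=\widetilde\Delta^{(B)}_S$ and $(\Delta^{(A)}_{SBE})^\dagger=\Delta^{(A)}_{SBE}$. This follows from cyclicity of the trace, since each is a sum of conjugations by self-adjoint projectors. Dualizing \eqref{eq:diagonal_pinching_intertwining} therefore gives the equivalent condition
\begin{equation}
\mathcal Q^\dagger\circ\widetilde\Delta^{(B)}_S=\Delta^{(A)}_{SBE}\circ\mathcal Q^\dagger\circ\widetilde\Delta^{(B)}_S
\end{equation}
on $\mathcal B(\mathcal H_S^{(B)})$. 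In words, $\mathcal Q^\dagger$ maps the range of $\widetilde\Delta^{(B)}_S$ into the fixed-point set of $\Delta^{(A)}_{SBE}$, which is the set of operators block-diagonal with respect to $\{P^{(A)}_m\otimes\mathbb I_{BE}\}$.

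The two rank-one assumptions enter at two different points.
\begin{itemize}
\item Since every $\widetilde P^{(B)}_n$ has rank one, $\widetilde\Delta^{(B)}_S(X)=\sum_n x_n\widetilde P^{(B)}_n$ with $x_n=\Tr[\widetilde P^{(B)}_nX]$. Hence the range of $\widetilde\Delta^{(B)}_S$ is the (weak-$*$ closed) span of the $\widetilde P^{(B)}_n$. The dual condition is thus equivalent to requiring $Y_n:=\mathcal Q^\dagger[\widetilde P^{(B)}_n]$ to be block-diagonal for every $n$. In infinite dimension, the passage from finite sums to the full sum uses normality (weak-$*$ continuity) of $\mathcal Q^\dagger$.
\item Since every $P^{(A)}_m=\ket m\!\bra m$ has rank one, any block-diagonal $Y_n$ can be written as
\begin{equation}
Y_n=\sum_m(P^{(A)}_m\otimes\mathbb I)\,Y_n\,(P^{(A)}_m\otimes\mathbb I)=\sum_m P^{(A)}_m\otimes\Pi^{(A)}_{nm},
\end{equation}
with $\Pi^{(A)}_{nm}:=\bra m Y_n\ket m$ the partial matrix element, an operator on $\mathcal H^{(A)}_{BE}$.
\end{itemize}

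Next we verify the POVM properties \eqref{eq:block_preserving_povm_only}. Complete positivity of $\mathcal Q$ makes $\mathcal Q^\dagger$ positive, so $Y_n\ge0$. Consequently each compression $\bra m Y_n\ket m\ge0$. Trace preservation of $\mathcal Q$ makes $\mathcal Q^\dagger$ unital. Then $\sum_nY_n=\mathcal Q^\dagger[\mathbb I]=\mathbb I$, and compressing this identity with $\ket m$ gives $\sum_n\Pi^{(A)}_{nm}=\mathbb I_{BE}$. This proves the ``if'' direction.

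For the converse, an operator of the form \eqref{eq:block_preserving_def_only} is manifestly invariant under $\Delta^{(A)}_{SBE}$. Linearity and normality then give the dual condition on the whole range of $\widetilde\Delta^{(B)}_S$, and dualizing back yields \eqref{eq:diagonal_pinching_intertwining}.

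For the stronger sufficient condition, we compose \eqref{eq:full_pinching_intertwining} on the left with $\widetilde\Delta^{(B)}_S$ and use idempotence of both pinchings:
\begin{equation}
\widetilde\Delta^{(B)}_S\circ\mathcal Q=\widetilde\Delta^{(B)}_S\circ\widetilde\Delta^{(B)}_S\circ\mathcal Q=\widetilde\Delta^{(B)}_S\circ\mathcal Q\circ\Delta^{(A)}_{SBE}.
\end{equation}
This is exactly \eqref{eq:diagonal_pinching_intertwining}, so the first part applies.

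The only step needing care is the infinite-dimensional bookkeeping. One must justify that the dual of a composition is the reversed composition of duals for normal maps between trace-class and bounded operators. One must also justify that the series $\sum_m$ and $\sum_n$ converge in the weak-$*$ (or strong) topology, so that the reduction to the single operators $Y_n$ is legitimate. In finite dimension these points are trivial, so the argument reduces to the algebra above.
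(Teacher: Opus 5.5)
Your proof is correct and follows essentially the same route as the paper's proof in Appendix~\ref{App:equivalencepinching}. Both arguments dualize using self-duality of the pinchings, read off block-diagonality of \(\mathcal Q^\dagger(\widetilde P_n^{(B)})\), factor it via the rank-one \(P_m^{(A)}\), obtain the POVM properties from positivity and unitality of \(\mathcal Q^\dagger\), prove the converse via the span of the rank-one \(\widetilde P_n^{(B)}\), and deduce the sufficient condition from idempotence of \(\widetilde\Delta_S^{(B)}\); your remarks on normality and weak-\(*\) convergence in infinite dimension add care that the paper leaves implicit.
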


The proof of this lemma is provided in Appendix \ref{App:equivalencepinching}.

\vspace{0.3cm}
While this ensures that, in the \(B\)-description, the block populations of \(S\) do not depend on the inter-block coherences of \(S\) relative to \(A\), block-preservation still allows, in general, for a redistribution of populations across blocks. This can be seen on the right-hand side of \eqref{eq:block_preserving_def_only}, which may
involve contributions from several input blocks $m$. We can single out a stronger preservation property that excludes such a redistribution.

\begin{definition}[Strongly block-preserving reduced QRF channel]
\label{def:strong_block_preserving}

Let $\{P_m^{(A)}\}_m$ be a PVM on $\mathcal H_S^{(A)}$, $\{\widetilde P_n^{(B)}\}_n$ a PVM on $\mathcal H_S^{(B)}$, 
$\mathcal Q$ a
reduced QRF channel and
$\mathcal Q^\dagger$ its dual.
We say that
$\mathcal Q$ is \emph{strongly block-preserving} if there exists a permutation
$\pi$ such that
\begin{equation}
\mathcal Q^\dagger\!\big[\widetilde P_n^{(B)}\big]
=
P_{\pi(n)}^{(A)}\otimes \mathbb I_{BE}
\qquad \forall n .
\label{eq:strong_block}
\end{equation}
\end{definition}

Strong block-preservation implies that each output
block is associated with a unique input block,
and that
the corresponding population is preserved, independently of the state of the
inaccessible degrees of freedom. 
Indeed, strong block-preservation implies block-preservation, but not conversely,
as shown by the example below. Thus, strongly block-preserving reduced QRF
channels form a proper subclass of block-preserving ones.
In the factorized case \(V_{SE}(g)=V_S(g)\otimes V_E(g)\), a simple sufficient
condition for strong block-preservation, when the same block decomposition is
used in both frames, is $[P_m^{(A)},V_S(g)]=0
\,\,\,\, \forall \,m,g $.

\begin{example}[$\mathbb{Z}_2$ channel: block-preserving but not strongly block-preserving]
\label{ex:z2_block_preserving_not_strong}
Let $S$ be a qubit with Hamiltonian
\begin{equation}
H_S^{(A)}=\frac{\omega}{2}\sigma_z,
\end{equation}
and energy projectors
$P_0=\ket{0}\!\bra{0}$ and $P_1=\ket{1}\!\bra{1}$.
Let $B$ be a qubit reference frame with basis
$\{\ket{0}_B,\ket{1}_B\}$, and consider the discrete $\mathbb{Z}_2$ QRF
transformation
\begin{equation}
\hat S_{A\to B}
=
\ket{0}_A\!\bra{0}_B \otimes \mathbb{I}_S \otimes \mathbb{I}_E
+
\ket{1}_A\!\bra{1}_B \otimes \sigma_x^{(S)} \otimes V^{\dagger}_E,
\label{eq:z2_qrf_transformation_block_example2}
\end{equation}
from Example \ref{ex:reducedChannel}.
We test block-preservation with respect to the energy decomposition specified
in both frame descriptions. In frame $A$, this decomposition is given by the
spectral projectors $P_n$ of $H_S^{(A)}$. For the present qubit example, we identify the corresponding population projectors in frame $B$ with
the same two-outcome PVM,
\begin{equation}
\widetilde P_n^{(B)}=P^{(A)}_n,
\qquad
n=0,1.
\end{equation}
 We then ask
whether the populations associated with this PVM in frame $B$ are insensitive to
the inter-block energy coherences of $S$ in frame $A$. The dual reduced QRF channel acts as
\begin{equation}
\mathcal Q^\dagger(X)
=
\hat S_{A\to B}^{\dagger}
\bigl( X\otimes\mathbb{I}_A\otimes\mathbb{I}_E \bigr)
\hat S_{A\to B}.
\label{eq:z2_dual_reduced_qrf}
\end{equation}
Since $\sigma_x P_0\sigma_x=P_1$ and $
\sigma_x P_1\sigma_x=P_0$,
we obtain
\begin{align}
\mathcal Q^\dagger(\widetilde P^{(B)}_0)
&=
P^{(A)}_0\otimes\ket{0}\!\bra{0}_B\otimes\mathbb{I}_E
+
P^{(A)}_1\otimes\ket{1}\!\bra{1}_B\otimes\mathbb{I}_E,
\label{eq:z2_dual_P0}\\
\mathcal Q^\dagger(\widetilde P^{(B)}_1)
&=
P^{(A)}_0\otimes\ket{1}\!\bra{1}_B\otimes\mathbb{I}_E
+
P^{(A)}_1\otimes\ket{0}\!\bra{0}_B\otimes\mathbb{I}_E.
\label{eq:z2_dual_P1}
\end{align}
Therefore $\mathcal Q$ is block-preserving, with
\begin{align}
\Pi^{(A)}_{00} &= \ket{0}\!\bra{0}_B\otimes\mathbb{I}_E,
&
\Pi^{(A)}_{01} &= \ket{1}\!\bra{1}_B\otimes\mathbb{I}_E,
\nonumber\\
\Pi^{(A)}_{10} &= \ket{1}\!\bra{1}_B\otimes\mathbb{I}_E,
&
\Pi^{(A)}_{11} &= \ket{0}\!\bra{0}_B\otimes\mathbb{I}_E.
\label{eq:z2_block_preserving_effects}
\end{align}
For each input block $m$, the effects satisfy
\begin{equation}
\Pi_{0m}+\Pi_{1m}
=
\mathbb{I}_{BE},
\qquad
m=0,1,
\end{equation}
and hence define a POVM on $BE$.

The channel is not strongly block-preserving. Indeed, strong block-preservation
would require a permutation $\pi$ such that
\begin{equation}
\mathcal Q^\dagger(\widetilde P^{(B)}_n)
=
P^{(A)}_{\pi(n)}\otimes\mathbb{I}_{BE}
\qquad
\forall n.
\end{equation}
However, both $\mathcal Q^\dagger(\widetilde P^{(B)}_0)$ and $\mathcal Q^\dagger(\widetilde P^{(B)}_1)$ contain
contributions from the two distinct input blocks, correlated with different
effects on $B$. Consequently, neither operator can be written as a single system
projector tensored with $\mathbb{I}_{BE}$. This example therefore shows that
block-preservation does not imply strong block-preservation.
\end{example}

\subsection{A conservation law under strong block-preservation}
\label{subsec:strong_block_conservation}

Recent works \cite{Cepollaro,delaHamette_entropy_2026} have established conservation laws trading subsystem coherence against entanglement under QRF transformations. Here we show that an analogous invariant emerges already at the reduced-system level, as a consequence of the reduced QRF channel under the assumption of strong
block-preservation.

Let us assume that a system $S$, a reference frame $B$, and an environment $E$ are in an arbitrary state $\rho_{SBE}^{(A)}$ relative to $A$, with $\rho_S^{(A)}=\Tr_{BE}[\rho_{SBE}^{(A)}]$ the reduced state of the system in this frame. Using the von Neumann entropy $\mathcal S[\rho]:=-\Tr[\rho\log\rho]$, we define
the local entropy of $S$  in the $A$-description as
\begin{equation}
  \mathcal E_S^{(A)} := \mathcal S[\rho_S^{(A)}].
\end{equation}
Moreover, using the pinching map $\Delta_S^{(A)}$ associated
with the spectral PVM $\{P_n^{(A)}\}_n$ of $H_S^{(A)}$ (Def.~\ref{def:pinching}),
we define the dephased state
\begin{equation}
  \rho_{S,d}^{(A)} := \Delta_S^{(A)}[\rho_S^{(A)}]
  = \sum_n P_n^{(A)}\,\rho_S^{(A)}\,P_n^{(A)},
\end{equation}
and the relative entropy of coherence of $S$ in that energy basis as \cite{Baumgratz2014}
\begin{equation}
  \mathcal C_e^{(A)} := \mathcal S[\rho_{S,d}^{(A)}]-\mathcal S[\rho_S^{(A)}],
\end{equation}
which quantifies the amount of coherence of $\rho_S^{(A)}$
with respect to $\{P_n^{(A)}\}_n$.

Similarly, in the $B$-description we have $\rho_S^{(B)} =\mathcal Q[\rho_{SBE}^{(A)}]$,
where $\mathcal Q$ is the reduced QRF channel from $A$ to $B$, and we define
\begin{equation}
  \mathcal E_S^{(B)} := \mathcal S[\rho_S^{(B)}], \qquad
  \rho_{S,d}^{(B)} := \widetilde\Delta_S^{(B)}[\rho_S^{(B)}]
  = \sum_n \widetilde P_n^{(B)}\,\rho_S^{(B)}\,\widetilde P_n^{(B)}, \qquad
  \mathcal C_e^{(B)} := \mathcal S[\rho_{S,d}^{(B)}]-\mathcal S[\rho_S^{(B)}],
\end{equation}
with $\{\widetilde P_n^{(B)}\}_n$ a PVM in $\mathcal{H}^{(B)}_S$.
These quantities are related across descriptions as stated in the following theorem.

\begin{theorem}[Reduced-system entropy--coherence conservation law]
\label{thm:conservation_coherence_entanglement}
Let \(\rho_{SBE}^{(A)}\) be a state in the \(A\)-description, and let
\(\mathcal Q\) be the reduced QRF channel from the \(A\)-description to the
\(B\)-description. Let \(\{P_m^{(A)}\}_m\) and
\(\{\widetilde P_n^{(B)}\}_n\) be non-degenerate PVMs on
\(\mathcal H_S^{(A)}\) and \(\mathcal H_S^{(B)}\), respectively.

Assume that \(\mathcal Q\) is strongly block-preserving with respect to these
two PVMs, then the sum of the local entropy and the relative entropy of coherence of $S$ is frame-invariant:
\begin{equation}
\boxed{
\mathcal{E}_S^{(A)} + \mathcal{C}_e^{(A)}
=
\mathcal{E}_S^{(B)} + \mathcal{C}_e^{(B)}.
}
\label{eq:conservation_law}
\end{equation}
If the global state is pure, then \(\mathcal E_S^{(A)}\) and
\(\mathcal E_S^{(B)}\) are the entanglement entropies across the bipartitions
\(S:BE\) and \(S:AE\), respectively. In this case,
Eq.~\eqref{eq:conservation_law} is a conservation law for the sum of system
coherence and system entanglement entropy.
\end{theorem}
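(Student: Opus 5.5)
The key observation is that, by the definition of the relative entropy of coherence, the sum on each side telescopes:
\[
\mathcal E_S^{(A)}+\mathcal C_e^{(A)}=\mathcal S[\rho_{S,d}^{(A)}],\qquad \mathcal E_S^{(B)}+\mathcal C_e^{(B)}=\mathcal S[\rho_{S,d}^{(B)}].
\]
So the claim reduces to showing that the dephased states in the two frames have the same von Neumann entropy. Since both PVMs are non-degenerate (rank one), each dephased state is diagonal in a fixed basis. Its entropy is therefore the Shannon entropy of the block populations: $p_m^{(A)}=\Tr[P_m^{(A)}\rho_S^{(A)}]$ in frame $A$, and $p_n^{(B)}=\Tr[\widetilde P_n^{(B)}\rho_S^{(B)}]$ in frame $B$. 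The plan is thus to show that $\{p_n^{(B)}\}_n$ is a permutation of $\{p_m^{(A)}\}_m$.

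This follows directly from strong block-preservation (Definition~\ref{def:strong_block_preserving}). Using the duality relation \eqref{eq:dualmap} together with $\rho_S^{(B)}=\mathcal Q[\rho_{SBE}^{(A)}]$, I will write
\[
p_n^{(B)}=\Tr\big[\mathcal Q^\dagger[\widetilde P_n^{(B)}]\,\rho_{SBE}^{(A)}\big]=\Tr\big[(P_{\pi(n)}^{(A)}\otimes\mathbb I_{BE})\,\rho_{SBE}^{(A)}\big]=p_{\pi(n)}^{(A)}.
\]
The last equality uses the partial trace over $BE$. Since $\pi$ is a bijection and the Shannon entropy is permutation invariant, $H(\{p^{(B)}\})=H(\{p^{(A)}\})$, and this gives \eqref{eq:conservation_law}.

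The only point requiring care is that $\pi$ really is a bijection between the two index sets. This holds by hypothesis in Definition~\ref{def:strong_block_preserving}. It is also consistent with completeness: $\sum_n \mathcal Q^\dagger[\widetilde P_n^{(B)}]=\mathcal Q^\dagger[\mathbb I]=\mathbb I$ forces each $P_m^{(A)}$ to appear exactly once. In infinite dimensions one should also note that the entropies may be infinite; the identity then holds in the extended sense, since both sides equal the same Shannon entropy.

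The statement about pure global states is standard. If $\rho_{SBE}^{(A)}$ is pure, then $\mathcal S[\rho_S^{(A)}]$ is the entanglement entropy across $S:BE$. Because $\hat S_{A\to B}$ is unitary, $\hat S_{A\to B}\rho\hat S_{A\to B}^\dagger$ is also pure, so $\mathcal S[\rho_S^{(B)}]$ is the entanglement entropy across $S:AE$.
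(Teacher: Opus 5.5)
Your proposal is correct and is essentially the paper's own proof: you telescope $\mathcal E_S+\mathcal C_e$ to $\mathcal S[\rho_{S,d}]$, use the dual map and strong block-preservation to get $p_n^{(B)}=p^{(A)}_{\pi(n)}$, and conclude by permutation invariance of the entropy. Your extra observations, that unitality of $\mathcal Q^\dagger$ forces $\pi$ to be a bijection and that $\hat S_{A\to B}$ preserves purity, are correct details the paper leaves implicit; the infinite-dimensional aside, however, should be restricted to $\mathcal S[\rho_S]<\infty$, since otherwise $\mathcal C_e$ is an undefined $\infty-\infty$.
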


\begin{proof}
By definition of the relative entropy of coherence,
$\mathcal C_e(\rho_S)=\mathcal S(\rho_{S,d})-\mathcal S(\rho_S)$, hence
\begin{equation}
\mathcal E_S(\rho_S)+\mathcal C_e(\rho_S)
=\mathcal S(\rho_S)+\mathcal S(\rho_{S,d})-\mathcal S(\rho_S)
=\mathcal S(\rho_{S,d}).
\label{eq:Ec_equals_diag_entropy}
\end{equation}
Let $\rho_{SBE}^{(A)}$  be a state in the \(A\)-description and denote by
$\rho_S^{(B)}=\mathcal Q[\rho_{SBE}^{(A)}]$ the reduced system state in the
$B$-description. Under \eqref{eq:strong_block}, the
populations in the preserving basis are permuted:
\[
p_n^{(B)}=\Tr[\widetilde P_n^{(B)}\rho_S^{(B)}]
=\Tr[\mathcal Q^\dagger(\widetilde P_n^{(B)})\,\rho_{SBE}^{(A)}]
=\Tr[P_{\pi(n)}^{(A)}\rho_S^{(A)}]
=p_{\pi(n)}^{(A)}.
\]
Thus the diagonal spectra are related by a permutation, and since permutations leave the entropy unchanged,
we obtain
\(
\mathcal S(\rho_{S,d}^{(B)})
=\mathcal S(\rho_{S,d}^{(A)})
\).
Combining with~\eqref{eq:Ec_equals_diag_entropy} yields
\(
\mathcal E_S^{(A)}+\mathcal C_e^{(A)}=\mathcal E_S^{(B)}+\mathcal C_e^{(B)}
\).
\end{proof}
\begin{remark}
For degenerate block decompositions, strong block-preservation fixes the block
populations only up to permutation, and therefore conserves the block-population
entropy \(H(\{p_n\})=-\sum_n p_n\log p_n\). It does not, in general, conserve
the entropy of the block-diagonal state
\(\mathcal S(\sum_n P_n\rho_S P_n)\), which also depends on the spectra within
the degenerate blocks.
\end{remark}
Equation~\eqref{eq:conservation_law} can be read as a trade-off between two
system-level quantities associated with the chosen population--coherence
structure of \(S\): the local entropy of \(S\), quantified by
\(\mathcal E_S\), and the coherence of \(S\) with respect to the chosen PVM,
quantified by \(\mathcal C_e\). Under a strongly block-preserving QRF map,
these two quantities may individually change across descriptions, but their sum remains
fixed. Technically, the conserved quantity is the diagonal entropy
$\mathcal S(\rho_{S,\mathrm{d}})$, which depends only on the energy-population
statistics and is unaffected by its distribution between local coherence and correlations with the reference. When the global state $\rho^{(A)}_{SBE}$ is pure, the local entropy
\(\mathcal E_S\) is also an entanglement entropy; in that case, the same
identity becomes a conservation law for the sum of system coherence and
entanglement entropy. 

Note that \eqref{eq:conservation_law} is a purely kinematic statement that does not assume any particular reduced dynamics for the system. If, in addition, the dynamics in frame~$A$ is of pure-dephasing type, then the
block populations are constant in time, and therefore so is
$\mathcal S(\rho_{S,\mathrm d}(t))$. In that case, under the same hypotheses,
$\mathcal E_S(t)+\mathcal C_e(t)$ is not only frame-invariant but also conserved by the temporal evolution.

Ref.~\cite{Cepollaro} established conservation laws involving coherence and
entanglement of the \emph{reference} degrees of freedom under ideal QRF
transformations.\footnote{Note that while the conservation law does not hold for non-ideal QRFs in general \cite{Cepollaro}, Ref.~\cite{delaHamette_entropy_2026} provides a bound on the entropy difference for non-ideal frames.}  In contrast, Theorem~\ref{thm:conservation_coherence_entanglement} is formulated at the reduced-system level: it concerns the local entropy and coherence of \(S\), and follows from strong block-preservation of the reduced QRF channel. Moreover, the statement is formulated directly in the open-system setting with an environment \(E\), so that the entanglement interpretation, when the global state is pure, refers to the bipartitions \(S:BE\) in the \(A\)-description and \(S:AE\) in the \(B\)-description. Another difference is that, in Ref.~\cite{Cepollaro}, the dephased state $\rho_d$, used to define the relative entropy of coherence, is always obtained by pinching in the basis induced by the symmetry group. Here, we show that the tradeoff between entropy and coherence holds for any system PVM that satisfies  the strong block-preserving condition. Finally, although the general framework of this paper is developed for ideal QRFs, the proof of the theorem does not use the explicit \(L^2(G)\) structure of the reference, but only the channel condition in Eq.~\eqref{eq:strong_block}.  
The conservation law
\eqref{eq:conservation_law} therefore provides a frame-invariant system-level quantity that is preserved under a broad class of QRF channels.

\subsection{Classical vs quantum signatures of reduced QRF transformations}
\label{subsec:Quantumvsclassical}
The reference-frame changes considered in this work are changes between
quantum systems, namely QRFs. This is manifest at the level of the total
description, which includes the open system, the environment, and the reference
systems themselves. However, once one restricts attention to the reduced
description of the open system alone, the induced action of a QRF transformation
need not display quantum features. In this section we identify a
\emph{classical misalignment} regime, in which the reduced action on \(S\)
admits an interpretation as random uncertainty about the frame configuration,
and contrast it with quantum reduced-frame effects. Distinguishing these regimes is operationally important: it tells us when frame
dependence at the level of \(S\) can be understood as
classical uncertainty about the reference configuration, and when this
classical explanation fails.

\begin{definition}[Classical misalignment representation]
\label{def:classical_misalignment}
We say that the action of a reduced QRF channel $\mathcal{Q}$ on a state $\rho^{(A)}_{SBE}$ admits a
\emph{classical misalignment representation} with respect to the group action
\(\{V_S(g)\}_{g\in G}\) if its induced action on \(S\) can be written as an
incoherent average over frame orientations,
\begin{equation}
\rho_S^{(B)}=\mathcal{Q}[\rho^{(A)}_{SBE}]= \int_G dg\; p(g)\, V^{\dagger}_S(g)\, \rho^{(A)}_{S} \, V_S(g),
\qquad p(g)\ge 0,\quad \int_G dg\,p(g)=1,
\label{eq:classical_misalignment}
\end{equation}
where $dg$ is the (left) Haar measure on $G$ and $V_S(g)$ is a unitary
representation on $\mathcal H_S$. 
\end{definition}

This corresponds to classical uncertainty
about the parameter $g$. Perfect knowledge reduces to a single
unitary $V_S(g_0)$ (i.e.\ $p(g)=\delta(g-g_0)$), otherwise the incoherent average over the different orientations represents an uncertainty in the alignment of the new reference frame relative to the old -- a \emph{misalignment}. Any classical misalignment channel is random-unitary and, in particular, unital:
$\Phi_{\mathrm{cl}}(\mathbb I)=\mathbb I$.\\

For ideal QRF transformations, a classical misalignment representation is
obtained whenever, at the time of the QRF transformation, the state factorizes
as \(\rho_{SBE}^{(A)}=\rho_{SE}^{(A)}\otimes\rho_B^{(A)}\), and the group action
factorizes as \(V_{SE}(g)=V_S(g)\otimes V_E(g)\).
Under these assumptions, the perspective-change operator reads
\[
S_{A\to B}=\int_G dg\, |g^{-1}\rangle_A\!\langle g|_B \otimes V^{\dagger}_S(g)\otimes V^{\dagger}_E(g),
\]
and the reduced state in frame $B$ becomes
\begin{align}
\rho_S^{(B)}
&= \Tr_{AE}\!\big[S_{A\to B}\,(\rho^{(A)}_{SE}\otimes\rho^{(A)}_B)\,S_{A\to B}^\dagger\big] \nonumber\\
&= \iint_G dg\,dg'\;\underbrace{\langle g|\rho^{(A)}_B|g'\rangle}_{=:k(g,g')}\ \langle g'^{-1}|g^{-1}\rangle_A
\,\Tr_E\!\Big[(V^{\dagger}_S(g)\!\otimes\!V^{\dagger}_E(g))\,\rho^{(A)}_{SE}\,(V_S(g')\!\otimes\!V_E(g'))\Big]
\,.
\label{eq:kernel_form}
\end{align}
In the ideal limit, $\langle g|g'\rangle=\delta(g-g')$ and thus
$\langle g^{-1}|g'^{-1}\rangle=\delta(g-g')$, only the diagonal kernel
$k(g,g)$ contributes. Hence, defining the normalized probability density
$p(g):=k(g,g)$
, we obtain
\begin{align}
\rho_S^{(B)}
&= \int_G dg\, p(g)\,\Tr_E\!\Big[(V^{\dagger}_S(g)\!\otimes\!V^{\dagger}_E(g))\,\rho^{(A)}_{SE}\,(V_S(g)\!\otimes\!V_E(g))\Big] \nonumber\\
&= \int_G dg\, p(g)\,V^{\dagger}_S(g)\,\rho_S^{(A)}\,V_S(g),
\label{eq:random_unitary}
\end{align}
where in the last step we used invariance of the partial trace under local
unitaries on $E$.
Equation \eqref{eq:random_unitary} provides a concrete instance in which a
quantum reference frame, once discarded, induces a reduced transformation
that nevertheless lies in the classical misalignment class: the effective
distribution $p(g)=\langle g|\rho^{(A)}_B|g\rangle$ depends only on the diagonal
reference statistics in the orientation basis.
On the other hand, considering the more general situations of non-ideal reference frames or non-factorized group actions allows us to leave the classical misalignment class of reduced QRF transformations.
In particular, non-ideal reference frames have
non-orthogonal orientation states, so that off-diagonal contributions \(g\neq g'\)
are not removed by the trace over the old reference: 
in this case, Eq.(\ref{eq:random_unitary}) has to be replaced
with a double-kernel structure (compare with Eq.(\ref{eq:kernel_form})),
\begin{equation}
\rho^{(B)}_S
=
\iint_G dg\,dg'\,
f(g,g')\,
\Tr_E\!\left[
\bigl(V^{\dagger}_S(g)\otimes V^{\dagger}_E(g)\bigr)
\rho_{SE}^{(A)}
\bigl(V_S(g')\otimes V_E(g')\bigr)
\right],
\label{eq:double_kernel_general}
\end{equation}
with $f(g, g') = k(g,g')\langle g'^{-1}|g^{-1}\rangle_A$, where the off-diagonal contributions are controlled 
by reference coherences, \(\langle g|\rho_B^{(A)}|g'\rangle\) with \(g\neq g'\), together with
the non-orthogonality of the orientation states.
This suggests that QRF imperfections need not only be viewed as errors: in some
settings they can enlarge the class of effective reduced transformations beyond
classical misalignment.
Similarly, if, instead, the group action $V_{SE}(g)$ does not factorize\footnote{Non-factorizing actions are usually excluded in idealized QRF models by choosing a subsystem split for which the group acts as a product. In general, factorization depends on how one partitions the Hilbert space into accessible and inaccessible degrees of freedom; e.g.\ Lorentz boosts can couple spin and momentum, yielding a non-product representation for the spin--momentum split~\cite{Giacomini2018}.},
tracing out $E$ we are left with
\begin{equation}
\rho_S^{(B)}
=\int_G dg\,p(g)\;\Tr_E\!\big[V^{\dagger}_{SE}(g)\,\rho_{SE}^{(A)}\,V_{SE}(g)\big],
\label{eq:mixture_general_channels}
\end{equation}
where the contribution associated with each $g$ depends, in general, on the full
joint state $\rho_{SE}^{(A)}$ and need not reduce to a unitary conjugation on
$\rho_S^{(A)}$ alone. Finally, if we instead relax the assumption that the state factorizes but assume a general pure state with Schmidt-decomposition $\rho_{SBE}^{(A)}=\sum_{j,j'}\sqrt{\lambda_j\lambda_{j'}}|a_j\rangle\langle a_{j'}|_{SE}\otimes |b_j\rangle \langle b_{j'}|_B$, we obtain
\begin{equation}
    \rho_S^{(B)}=\sum_{j,j'}\sqrt{\lambda_j\lambda_{j'}} \int_G dg\, \underset{=:k_{j,j'}(g)}{\underbrace{\langle g|b_j\rangle\langle b_{j'}|g\rangle}} \;\Tr_E\!\big[V^{\dagger}_{SE}(g)|a_j\rangle\langle a_{j'}|V_{SE}(g) \big].
\end{equation}
In general, this can no longer be understood in terms of a classical mixture
of frame orientations due to entanglement across the \(B-SE\) bipartition.

Let us stress that the classical-misalignment representation introduced here and the
block-preserving structure developed above are, in general, independent
properties of a reduced QRF transformation. A reduced QRF channel may be
block-preserving without admitting a random-unitary representation on \(S\),
for example when correlations with inaccessible degrees of freedom affect the
reduced action while preserving the chosen block populations. Conversely, a
classical-misalignment representation need not be block-preserving with respect
to a prescribed PVM: if the unitaries \(V_S(g)\) rotate the chosen blocks, the
random-unitary average can mix populations and coherences in that decomposition.

\subsubsection*{A witness based on unitality}
The previous discussion concerns the reduced transformation induced by the QRF
change alone. In the open-system setting considered in this work, however, the
state to which the reduced QRF channel is applied is itself generated by a
global evolution. Thus, at each time \(t\), the accessible state in frame \(B\)
is
\begin{equation}
\rho_S^{(B)}(t)
=
\mathcal Q\!\left[\rho_{SBE}^{(A)}(t)\right].
\end{equation}
We say that the combined action of the evolution and the QRF change admits a
classical misalignment form at time \(t\) if
\begin{equation}
\rho_S^{(B)}(t)
=
\mathcal G_{p_t}\!\left(\rho_S^{(A)}(t)\right)
:= \int_G dg\, p_t(g)\,V^{\dagger}_S(g)\,\rho_S^{(A)}(t)\,V_S(g),
\label{eq:classical_misalignment_dynamical}
\end{equation}
for some probability density \(p_t(g)\) on \(G\), 
$p_t(g)\ge 0$ and $\int_G dg\,p_t(g)=1$, and where indeed
\begin{equation}
\rho_S^{(A)}(t) =\Tr_{BE}\left[\rho_{SBE}^{(A)}(t)\right] = \Tr_{BE}\!\Big[U_{SBE}(t)\rho^{(A)}_{SBE}(0)U_{SBE}^\dagger(t)\Big].
\end{equation}
If, in addition, the reduced dynamics in frame \(A\) is described by a channel
\(\Phi_t\), so that
\(\rho_S^{(A)}(t)=\Phi_t[\rho_S^{(A)}(0)]\), the corresponding reduced dynamics
in frame \(B\) takes the form
\[
\Lambda_t
=
\mathcal G_{p_t}\circ \Phi_t .
\]
This leads to the following unitality witness.

\begin{proposition}[Unitality preserved under classical misalignment]
\label{thm:nogo_unital}
Let
\(\{\Phi_t\}_{t\ge 0}\) be a family of CPTP maps on \(S\) that is unital for all
times,
\begin{equation}
\Phi_t(\mathbb I_S)=\mathbb I_S \qquad \forall\, t\ge 0 .
\label{eq:unital_assumption}
\end{equation}
Let \(\{\mathcal G_{p_t}\}_{t\ge 0}\) be a 
classical misalignment channel,
\begin{equation}
\mathcal G_{p_t}(\rho)
=
\int_G dg\,p_t(g)\,
V^{\dagger}_S(g)\,\rho\,V_S(g),
\qquad
p_t(g)\ge 0,\quad
\int_G dg\,p_t(g)=1 .
\label{eq:twirl_map_t}
\end{equation}
Then
\[
\Lambda_t=\mathcal G_{p_t}\circ \Phi_t
\]
is unital for all \(t\ge 0\).
\end{proposition}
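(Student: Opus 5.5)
The plan is a direct computation: evaluate the composition $\Lambda_t=\mathcal G_{p_t}\circ\Phi_t$ on the identity and use the two structural facts already available, namely the unitality assumption \eqref{eq:unital_assumption} and the unitarity of each $V_S(g)$. Since unitality is stated for finite dimension $d$, we work there, so that $\mathbb I_S$ is trace class and the maps can be applied to it directly. Equivalently, one could apply both maps to $\mathbb I_S/d$ and show it is a fixed point.

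\textbf{Step 1.} By \eqref{eq:unital_assumption}, $\Lambda_t(\mathbb I_S)=\mathcal G_{p_t}(\Phi_t(\mathbb I_S))=\mathcal G_{p_t}(\mathbb I_S)$.

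\textbf{Step 2.} Show that $\mathcal G_{p_t}$ is unital. From \eqref{eq:twirl_map_t} and $V_S^\dagger(g)V_S(g)=\mathbb I_S$,
\begin{equation}
\mathcal G_{p_t}(\mathbb I_S)=\int_G dg\,p_t(g)\,V_S^\dagger(g)\,\mathbb I_S\,V_S(g)=\Big(\int_G dg\,p_t(g)\Big)\mathbb I_S=\mathbb I_S ,
\end{equation}
using the normalization $\int_G dg\,p_t(g)=1$. Since this holds for each $t\ge0$, Steps 1 and 2 together give $\Lambda_t(\mathbb I_S)=\mathbb I_S$ for all $t\ge 0$.

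\textbf{Main obstacle.} The only point needing care is that the integral in Step 2 is a Bochner (or weak) integral of operators over a locally compact group. To justify pulling $\mathbb I_S$ out, I would check the identity weakly: for any bounded $X$, $\Tr[X\,\mathcal G_{p_t}(\mathbb I_S)]=\int dg\,p_t(g)\Tr[X]=\Tr[X]$. This follows because the integrand is constant in $g$ and $p_t$ is an integrable density. Nothing about ideal frames, the $L^2(G)$ structure, or factorization of the state is needed.

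In the language of the witness, this yields the contrapositive: if $\Phi_t$ is known to be unital and the observed $\Lambda_t$ is not, then $\Lambda_t$ admits no classical misalignment decomposition.
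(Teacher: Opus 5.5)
Your proof is correct and follows the paper's own reasoning: the paper observes that every classical misalignment channel is random-unitary and therefore unital, so composing it with the unital $\Phi_t$ gives $\Lambda_t(\mathbb I_S)=\mathcal G_{p_t}(\mathbb I_S)=\mathbb I_S$. Your weak-integral justification of pulling $\mathbb I_S$ out of the group integral is a harmless detail that the paper leaves implicit.
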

Operationally, Proposition~\ref{thm:nogo_unital} implies that classical misalignment cannot turn unital dynamics into non-unital dynamics: it can only
reshape unital noise on \(S\). In particular, it cannot generate non-unital
features such as relaxation towards a preferred state or dissipative population
bias when the reduced dynamics in frame \(A\) is unital. Equivalently, if the
dynamics in frame \(A\) is known to be unital while the reduced dynamics
inferred in frame \(B\) satisfies
\(\Lambda_t(\mathbb I_S)\neq\mathbb I_S\), then the frame dependence cannot be
accounted for by a classical misalignment channel of the form
\eqref{eq:twirl_map_t}. Such non-unitality therefore provides an operational witness of reduced QRF effects beyond classical random frame misalignment.

\section{Pure dephasing: a complete criterion for population preservation}

\label{sec:gpreserving_pd}
In this section, we consider the reduced-channel framework in the special case of pure dephasing, the paradigmatic noise model in which populations are preserved while coherences decay. 
Our goal is to determine under which conditions the pure-dephasing dynamics is preserved after a change of QRF, for all factorized initial states in the $S-BE$ partition.
We show that this happens precisely when the reduced QRF channel satisfies a \emph{dynamical compatibility} condition with the controlled evolution generating pure dephasing in the original frame.

As discussed above, block-preservation is a \emph{kinematic} condition on the reduced QRF channel: it prevents population--coherence mixing with respect to a chosen block structure. By itself, however, it does not guarantee that a dynamics that preserves populations in frame $A$ remains population-preserving in frame $B$. After the change of frame and the trace over inaccessible degrees of freedom, the induced coarse-graining on $S$ may still generate an effective redistribution of populations among the blocks. The missing ingredient is therefore not additional kinematic structure, but compatibility between the reduced coarse-graining induced by $\mathcal Q$ and the conditional dynamics underlying pure dephasing in frame $A$.\\

Recall from Sec.~\ref{subsec:pure_dephasing} that pure dephasing in the
$A$-description is generated by a controlled joint propagator of the form
\begin{equation}
U^{(A)}_{SBE}(t)=\sum_m P_m^{(A)}\otimes U_m^{(A)}(t),
\label{eq:controlled_unitary_repeat}
\end{equation}
with respect to the energy projectors $\{P_m^{(A)}\}_m$, where each
$U_m^{(A)}(t)$ acts on the joint inaccessible degrees of freedom $BE$. This structure guarantees that the populations associated with $\{P_m^{(A)}\}_m$ are
constant.

\begin{definition}[Controlled-dynamics compatible reduced QRF channel]
\label{def:G_preserving}
Assume that the reduced QRF channel $\mathcal Q$ is block-preserving with
respect to the energy projectors $\{P_m^{(A)}\}_m$ and a fixed PVM
$\{\widetilde P_n^{(B)}\}_n$ on $\mathcal H_S^{(B)}$, so that
\begin{equation}
\mathcal Q^\dagger(\widetilde P_n^{(B)})
=
\sum_m P_m^{(A)}\otimes \Pi_{nm}^{(A)},
\label{eq:block_preserving_def_only_dyn}
\end{equation}
with $\Pi_{nm}^{(A)}\in\mathcal B(\mathcal H_{BE}^{(A)})$.
Assume moreover that the joint propagator
in the $A$-description has the controlled form
\eqref{eq:controlled_unitary_repeat}. We say that $\mathcal Q$ is  \emph{dynamically compatible} (with respect to this
block structure and controlled dynamics) if, for all $m,n$ and all $t\ge 0$,
\begin{equation}
\big[\Pi_{nm}^{(A)},\,U_m^{(A)}(t)\big]=0.
\label{eq:G_preserving_commutation}
\end{equation}
\end{definition}

Dynamical compatibility is, in general, a strengthening of block-preservation,
obtained by imposing the commutation condition
\eqref{eq:G_preserving_commutation} on the operators $\Pi_{nm}^{(A)}$
that appear in \eqref{eq:block_preserving_def_only}. For strongly block-preserving reduced QRF channels,
however, this extra requirement follows automatically. Indeed, if $\mathcal Q^\dagger\!\big[\widetilde P_n^{(B)}\big]= P_{\pi(n)}^{(A)}\otimes \mathbb I_{BE}$
for all $n$, then $\Pi_{nm}^{(A)}=\delta_{m,\pi(n)}\,\mathbb I_{BE}$,
and hence $[\Pi_{nm}^{(A)},U_m^{(A)}(t)]=0$
for all  $n,m,\ \forall t\ge 0$ holds trivially. Thus, once the controlled dynamics
\eqref{eq:controlled_unitary_repeat} is fixed, strong block-preservation already
implies dynamical compatibility. This holds in particular in the factorized regime
\(V_{SE}(g)=V_S(g)\otimes V_E(g)\), when
\([H_S^{(A)},V_S(g)]=0\) for all \(g\in G\). In this case
\(V_S(g)\) commutes with the spectral projectors \(P_m^{(A)}\), so the reduced
QRF channel is strongly block-preserving with respect to the corresponding
energy PVMs in the two frame descriptions, and is therefore dynamically
compatible with any controlled evolution of the form
\eqref{eq:controlled_unitary_repeat}.

Having introduced the criterion of dynamical compatibility, we are now in a position to formalize when a change of QRF 
preserves a fixed population decomposition, uniformly over all factorized initial states.
\begin{theorem}[Dynamical compatibility $\Leftrightarrow$ preservation of populations]
\label{thm:gpreserving_preservation}
Assume that the joint propagator in the $A$-description has the controlled form
\begin{equation}
U^{(A)}(t)=\sum_m P_m^{(A)}\otimes U_m^{(A)}(t),
\label{eq:controlled_unitary_repeat_thm}
\end{equation}
with respect to a non-degenerate energy PVM $\{P_m^{(A)}\}_m$ on
$\mathcal H_S^{(A)}$.
Let $\mathcal Q$ be the reduced QRF channel of Definition
\ref{def:reduced_qrf_channel}, and assume that $\mathcal Q$ is
block-preserving with respect to $\{P_m^{(A)}\}_m$ and to a fixed PVM
$\{\widetilde P_n^{(B)}\}_n$ on $\mathcal H_S^{(B)}$, i.e.
\begin{equation}
\mathcal Q^\dagger(\widetilde P_n^{(B)})
=
\sum_m P_m^{(A)}\otimes \Pi_{nm}^{(A)} .
\label{eq:block_preserving_def_only_thm}
\end{equation}

Then the following are equivalent:
\begin{enumerate}[label=(\roman*)]
\item $\mathcal Q$ is dynamically compatible, i.e.
\begin{equation}
[\Pi_{nm}^{(A)},\,U_m^{(A)}(t)]=0
\qquad
\forall n,m,\ \forall t\ge 0.
\label{eq:G_preserving_commutation_thm}
\end{equation}

\item The population structure is preserved in the
$B$-description: for every factorized initial state $\rho_{SBE}^{(A)}(0)=\rho_S^{(A)}(0)\otimes \rho_{BE}^{(A)}(0)$,
the populations associated with the \emph{same fixed} PVM
$\{\widetilde P_n^{(B)}\}_n$ are constant,
\begin{equation}
p_n^{(B)}(t):=
\Tr\!\big[\widetilde P_n^{(B)}\,\rho_S^{(B)}(t)\big]
=
\Tr\!\big[\widetilde P_n^{(B)}\,\rho_S^{(B)}(0)\big]
\qquad
\forall n,\ \forall t\ge 0.
\label{eq:pd_preservation_pop_thm}
\end{equation}\end{enumerate}
\end{theorem}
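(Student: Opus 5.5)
The plan is to compute the $B$-frame populations directly in the Heisenberg picture, using the dual map $\mathcal Q^\dagger$ and the block-preserving form \eqref{eq:block_preserving_def_only_thm}. For any factorized initial state $\rho_{SBE}^{(A)}(0)=\rho_S^{(A)}(0)\otimes\rho_{BE}^{(A)}(0)$, we write
\[
p_n^{(B)}(t)=\Tr\!\big[\mathcal Q^\dagger(\widetilde P_n^{(B)})\,U^{(A)}(t)\rho_{SBE}^{(A)}(0)U^{(A)\dagger}(t)\big]
=\Tr\!\Big[U^{(A)\dagger}(t)\Big(\textstyle\sum_m P_m^{(A)}\otimes\Pi_{nm}^{(A)}\Big)U^{(A)}(t)\,\rho_{SBE}^{(A)}(0)\Big].
\]
We then insert the controlled form \eqref{eq:controlled_unitary_repeat_thm}. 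Orthogonality of the projectors, $P_k P_m P_l=\delta_{km}\delta_{ml}P_m$, removes all cross terms, so the Heisenberg-evolved effect becomes
\[
\sum_m P_m^{(A)}\otimes U_m^{(A)\dagger}(t)\,\Pi_{nm}^{(A)}\,U_m^{(A)}(t).
\]
This operator is still block-diagonal with respect to $\{P_m^{(A)}\}$. Evaluating it on the product state gives
\[
p_n^{(B)}(t)=\sum_m p_m^{(A)}\,\Tr\!\big[U_m^{(A)\dagger}(t)\Pi_{nm}^{(A)}U_m^{(A)}(t)\,\rho_{BE}^{(A)}(0)\big],
\]
where $p_m^{(A)}=\Tr[P_m^{(A)}\rho_S^{(A)}(0)]$. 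At $t=0$ we have $U_m^{(A)}(0)=\mathbb I$, so $p_n^{(B)}(0)=\sum_m p_m^{(A)}\Tr[\Pi_{nm}^{(A)}\rho_{BE}^{(A)}(0)]$.

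The implication (i)$\Rightarrow$(ii) is then immediate. If $[\Pi_{nm}^{(A)},U_m^{(A)}(t)]=0$, then by unitarity $U_m^{(A)\dagger}(t)\Pi_{nm}^{(A)}U_m^{(A)}(t)=\Pi_{nm}^{(A)}$. Hence every term in the sum is time independent, and $p_n^{(B)}(t)=p_n^{(B)}(0)$ for all factorized initial states.

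For (ii)$\Rightarrow$(i), we exploit the freedom in the factorized initial state, and this is where non-degeneracy enters. Since $P_m^{(A)}=\ket{m}\!\bra{m}$ has rank one, it is itself a legitimate state. Choosing $\rho_S^{(A)}(0)=P_{m_0}^{(A)}$ makes $p_m^{(A)}=\delta_{m m_0}$, which isolates a single block. Condition (ii) then reads
\[
\Tr\!\big[\big(U_{m_0}^{(A)\dagger}(t)\Pi_{nm_0}^{(A)}U_{m_0}^{(A)}(t)-\Pi_{nm_0}^{(A)}\big)\rho_{BE}\big]=0
\]
for every state $\rho_{BE}$ on $\mathcal H_{BE}^{(A)}$ and every $t\ge0$. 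Density operators span the trace class, and $\mathcal B(\mathcal H_{BE})$ is its dual, so the bounded operator in parentheses must vanish. This gives $U_{m_0}^{(A)\dagger}\Pi_{nm_0}^{(A)}U_{m_0}^{(A)}=\Pi_{nm_0}^{(A)}$, i.e.\ the commutator \eqref{eq:G_preserving_commutation_thm} vanishes. Since $m_0$, $n$ and $t$ are arbitrary, (i) follows.

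The only delicate step is this last separation argument. We need that a bounded operator with vanishing expectation on all states is zero, which holds via polarization or trace-class duality and is also fine in infinite dimensions. We also need that the block decomposition can be probed one block at a time. In the degenerate case $P_m$ is not a state, and one would instead have to test with arbitrary states supported in the block. The cross terms in the Heisenberg evolution could then survive if $U$ acted nontrivially inside blocks, so the rank-one assumption is exactly what keeps the argument clean.
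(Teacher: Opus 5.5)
Your proposal is correct and follows the paper's proof essentially step by step. The paper derives the same block-diagonal population formula $p_n^{(B)}(t)=\sum_m \Tr[P_m^{(A)}\rho_S^{(A)}(0)]\,\Tr[\Pi_{nm}^{(A)}U_m^{(A)}(t)\rho_{BE}^{(A)}(0)U_m^{(A)\dagger}(t)]$, working in the Schr\"odinger rather than the Heisenberg picture; it obtains (i)$\Rightarrow$(ii) from the time-independence of each term, and (ii)$\Rightarrow$(i) by choosing $\rho_S^{(A)}(0)=P_m^{(A)}$ and varying $\rho_{BE}^{(A)}(0)$, where your trace-class duality argument spells out a separation step the paper only asserts.
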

\begin{proof}
Consider an arbitrary
factorized initial state,
the joint state at time $t$ is
\[
\rho_{SBE}^{(A)}(t)
=
\sum_{m,m'}
P_m^{(A)}\,\rho_S^{(A)}(0)\,P_{m'}^{(A)}
\otimes
U_m^{(A)}(t)\,\rho_{BE}^{(A)}(0)\,U_{m'}^{(A)\dagger}(t).
\]
Using \eqref{eq:block_preserving_def_only_thm}, the population of the $n$-th
block in the $B$-description is
\begin{align}
p_n^{(B)}(t)
&=
\Tr\!\big[\widetilde P_n^{(B)}\,\rho_S^{(B)}(t)\big]
=
\Tr\!\big[\mathcal Q^\dagger(\widetilde P_n^{(B)})\,\rho_{SBE}^{(A)}(t)\big]
\nonumber\\
&=
\sum_m
\Tr\!\big[P_m^{(A)}\rho_S^{(A)}(0)\big]\,
\Tr\!\Big[
\Pi_{nm}^{(A)}\,
U_m^{(A)}(t)\rho_{BE}^{(A)}(0)U_m^{(A)\dagger}(t)
\Big],
\label{eq:pnB_general_formula_thm}
\end{align}
where only the diagonal terms $m=m'$ survive because
$P_m^{(A)}P_{m'}^{(A)}=\delta_{mm'}P_m^{(A)}$.

\emph{(i)$\Rightarrow$(ii).}
If $\mathcal Q$ is dynamically compatible, then
$[\Pi_{nm}^{(A)},U_m^{(A)}(t)]=0$ for all $m,n,t$, hence
\[
\Tr\!\Big[
\Pi_{nm}^{(A)}\,
U_m^{(A)}(t)\rho_{BE}^{(A)}(0)U_m^{(A)\dagger}(t)
\Big]
=
\Tr\!\big[\Pi_{nm}^{(A)}\,\rho_{BE}^{(A)}(0)\big].
\]
Therefore each term in \eqref{eq:pnB_general_formula_thm} is independent of
$t$, and so $p_n^{(B)}(t)$ is constant for all $n$, for all $t\ge 0$, and for
all factorized initial states.

\emph{(ii)$\Rightarrow$(i).}
Assume now that \eqref{eq:pd_preservation_pop_thm} holds for every factorized
initial state. Fix $m$ and choose
$\rho_S^{(A)}(0)=P_m^{(A)}$, so that \eqref{eq:pnB_general_formula_thm} reduces
to
\[
p_n^{(B)}(t)
=
\Tr\!\Big[
\Pi_{nm}^{(A)}\,
U_m^{(A)}(t)\rho_{BE}^{(A)}(0)U_m^{(A)\dagger}(t)
\Big].
\]
By assumption, this quantity is independent of $t$ for every choice of
$\rho_{BE}^{(A)}(0)$. Since this holds for all states on $BE$, it follows that
\[
U_m^{(A)\dagger}(t)\,\Pi_{nm}^{(A)}\,U_m^{(A)}(t)=\Pi_{nm}^{(A)}
\qquad
\forall n,m,\ \forall t\ge 0,
\]
which is equivalent to \eqref{eq:G_preserving_commutation_thm}. Hence
$\mathcal Q$ is dynamically compatible.
\end{proof}
\begin{remark}[Time-dependent effects]
If the reduced QRF transformation or the preserving PVM is time-dependent, the
effects \(\Pi_{nm}^{(A)}\) become time-dependent. In that case the appropriate
condition for preservation of populations is not simply
\([\Pi_{nm}^{(A)}(t),U_m^{(A)}(t)]=0\), but rather
\[
U_m^{(A)\dagger}(t)\Pi_{nm}^{(A)}(t)U_m^{(A)}(t)
=
\Pi_{nm}^{(A)}(0).
\]
The commutation condition used above is recovered when
\(\Pi_{nm}^{(A)}(t)\) is time independent.
\end{remark}

Theorem~\ref{thm:gpreserving_preservation} shows that dynamical compatibility is
equivalent to the constancy of the populations $p_n^{(B)}(t)$ for \emph{all}
factorized initial states
$\rho_{SBE}^{(A)}(0)=\rho_S^{(A)}(0)\otimes\rho_{BE}^{(A)}(0)$. If one instead fixes a single preparation $\rho_{BE}^{(A)}(0)$ of the inaccessible
degrees of freedom, then dynamical compatibility remains sufficient, but in general no
longer necessary: population constancy may occur accidentally for that specific
state due to cancellations in the expectation values 
$\Tr[\Pi_{nm}^{(A)}U_m^{(A)}(t)\rho_{BE}^{(A)}(0)U_m^{(A)\dagger}(t)]$.
Such state-dependent preservation is not robust and does not characterize the
transformation. Moreover, if the initial state is correlated,
$\rho_{SBE}^{(A)}(0)\neq \rho_S^{(A)}(0)\otimes\rho_{BE}^{(A)}(0)$, the reduced
evolution on $S$ need not define a CPTP map for arbitrary system preparations.
Nevertheless, assuming the controlled form
\eqref{eq:controlled_unitary_repeat}, the dynamical compatibility condition
\eqref{eq:G_preserving_commutation} remains sufficient to ensure that the
populations associated with the fixed projectors $\{\widetilde P_n^{(B)}\}$ are
constant along the resulting trajectory in the $B$-description.\\

\begin{example}[Dynamical compatibility without strong block-preservation] \label{ex:z2_dynamic_compatibility}
We continue Example~\ref{ex:z2_block_preserving_not_strong}, where the reduced QRF channel \(\mathcal Q\) was shown to be block-preserving with respect to the input energy PVM \(\{P_m^{(A)}\}_{m=0,1}\) and the output population PVM \(\{\widetilde P_n^{(B)}\}_{n=0,1}\), but not strongly block-preserving. The corresponding effects, given in Eq.~\eqref{eq:z2_block_preserving_effects}, are diagonal in the pointer basis \(\{\ket0_B,\ket1_B\}\) of the reference \(B\) and can be written as \[ \Pi_{nm}^{(A)} = \ket{b_{nm}}\!\bra{b_{nm}}_B\otimes\mathbb I_E, \] with \(b_{00}=b_{11}=0\) and \(b_{01}=b_{10}=1\).
Assume now that the conditional unitaries generating the pure-dephasing dynamics in frame \(A\) preserve the pointer decomposition of \(B\), namely \begin{equation} U_m^{(A)}(t) = \sum_{b=0,1} \ket b\!\bra b_B\otimes W_{m,b}^{(E)}(t), \label{eq:z2_dynamic_compatible_unitary} \end{equation} where \(W_{m,b}^{(E)}(t)\) are unitaries on \(E\). Then \[ \Pi_{nm}^{(A)}U_m^{(A)}(t) = \ket{b_{nm}}\!\bra{b_{nm}}_B \otimes W_{m,b_{nm}}^{(E)}(t) = U_m^{(A)}(t)\Pi_{nm}^{(A)} . \] The equality holds because \(\Pi_{nm}^{(A)}\) selects one pointer block of \(B\) and acts as the identity on \(E\) within that block. Therefore \begin{equation} \bigl[\Pi_{nm}^{(A)},U_m^{(A)}(t)\bigr]=0 \qquad \forall n,m,\ \forall t\geq0 . \label{eq:z2_dynamic_compatibility_condition} \end{equation} 
Hence the reduced QRF channel is dynamically compatible with the pure dephasing
evolution. By Theorem~\ref{thm:gpreserving_preservation}, for every factorized
initial state
$\rho_{SBE}^{(A)}(0)=\rho_S^{(A)}(0)\otimes\rho_{BE}^{(A)}(0)$,
the populations associated with the output PVM
$\{\widetilde P_n^{(B)}\}_{n=0,1}$ are constant in time,
\begin{equation}
\Tr\!\left[ \widetilde P_n^{(B)} \rho_S^{(B)}(t) \right]
=
\Tr\!\left[ \widetilde P_n^{(B)} \rho_S^{(B)}(0) \right]
\qquad
\forall\,n,\quad \forall\,t\geq0.
\end{equation}

This example shows that strong block-preservation is not necessary for dynamical
compatibility: the reduced QRF channel considered here is block-preserving but
not strongly block-preserving, while the conditional evolution satisfies the
dynamical compatibility condition~\eqref{eq:z2_dynamic_compatibility_condition}.

\end{example}

Theorem~\ref{thm:gpreserving_preservation} concerns the preservation of a fixed
population decomposition and does not by itself control the evolution of the
off-diagonal terms in the transformed frame. Preservation of pure dephasing in
the strict sense --- namely, constancy of populations together with
multiplicative evolution of each coherence --- requires additional structure. 
In the next section, we present such a scenario: the
classical-misalignment regime of Sec.~\ref{subsec:Quantumvsclassical}, where
the induced reduced action on \(S\) is random-unitary.

\section{Operational meaning of frame-induced decoherence}
\label{sec:Ramsey}
The previous sections identified conditions under which a change of QRF
preserves the population structure of a pure-dephasing dynamics. Here we ask
what happens to the corresponding coherences when population preservation holds.
In the regime considered below, the transformed dynamics retains the
pure-dephasing form, but the coherences acquire an additional frame-dependent
factor. As a result, the decoherence rate inferred in the transformed frame
splits into an environment-induced contribution and a reference-induced one.

Ramsey interferometry gives this decomposition direct operational meaning and, in particular, clarifies the distinction between describing the same physical interferometer from different quantum reference frames and performing local Ramsey experiments tied to different phase standards.
This type of experimental setup may also indicate a way to test the validity of the QRF perspective~\cite{Overstreet:2022zgq}, using the fact that, in the frame of the detectors (hence fixing their configuration), measurements only testing the relational degrees of freedom between two external particles should give the same outcome.

\subsection{Frame factors and additive rate decomposition}
\label{subsec:frame_factors}
We now specify the assumptions under which the reduced QRF transformation acts
multiplicatively on coherences. We specialize to the classical-misalignment
regime introduced in Sec.~\ref{subsec:Quantumvsclassical}. As we derived there, if (i) in the QRF of $A$ the state of $B$ is factorized with respect to the state of $S$ and $E$, (ii) there is no interaction between $B$ and $SE$, and (iii) the group action factorizes, then the reduced description of $S$ in frame $B$ takes the random-unitary form  of Eq.~\eqref{eq:classical_misalignment_dynamical},
which we recall here for convenience:
\begin{equation}
\rho_S^{(B)}(t)
=
\int_G dg\,p_t(g)\,V^{\dagger}_S(g)\,\rho_S^{(A)}(t)\,V_S(g),
\label{eq:random_unitary_general}
\end{equation}
with $p_t(g)\ge 0$ and $\int_G dg\,p_t(g)=1$ and $V_S(g)$ being the group action on system $S$.
Taking matrix elements in the energy basis $\{\ket n\}$ of $H_S^{(A)}$, which we assume to be non-degenerate, gives
\begin{equation}
\rho_{nm}^{(B)}(t)
=
\sum_{n',m'}
\left[
\int_G dg\,p_t(g)\,V^*_{n'n}(g)V_{m'm}(g)
\right]
\rho_{n'm'}^{(A)}(t),
\qquad
V_{nn'}(g):=\langle n|V_S(g)|n'\rangle.
\label{eq:matrix_element_mixing}
\end{equation}
In general, Eq.~\eqref{eq:matrix_element_mixing} shows that a reduced QRF transformation may mix matrix elements, 
unless the group action preserves the energy decomposition of $S$. We will
refer to the latter situation as the \emph{Hamiltonian-symmetric} regime: the representation $V_S(g)$
is diagonal in the energy basis,
$V_S(g)\ket n = e^{-i\phi_n(g)}\ket n$. In this case, the QRF contribution does not transfer population between energy
eigenspaces, and each energy coherence is simply rescaled by a frame-dependent
factor,
\begin{equation}
\label{eq:Fnm_phase_case}
\rho_{nm}^{(B)}(t)=F_{nm}(t)\,\rho_{nm}^{(A)}(t),
\qquad
F_{nm}(t):=\int_G dg\,p_t(g)\,e^{i(\phi_n(g)-\phi_m(g))}.
\end{equation}
We refer to $F_{nm}(t)$ as a \emph{frame factor}, since it depends only on the reference-frame degrees of freedom through the distribution $p_t(g)$ and it quantifies the contribution of the QRF change alone to the $(n,m)$ coherence. In operational terms, $p_t(g)$ is the sampling profile induced by the reference state; hence variations of
$F_{nm}(t)$ directly reflect how the reference-frame degrees of freedom affect coherence in the
transformed description.

Recalling the general definition of the (time-local) decoherence rate introduced in Sec.~\ref{subsec:pure_dephasing}, Eq.~\eqref{eq:decoherence_rate}, and applying it to the coherences in frames $A$ and $B$, the relation
\eqref{eq:Fnm_phase_case} immediately implies the additive decomposition
\begin{equation}
\gamma_{nm}^{(B)}(t)
=
\underbrace{\gamma_{nm}^{(A)}(t)}_{\substack{\textit{environment-induced}\\\textit{dephasing}}}
+
\underbrace{\gamma_{nm}^{\mathrm{fr}}(t)}_{\textit{frame-induced contribution}},
\qquad
\gamma_{nm}^{\mathrm{fr}}(t):=-\frac{d}{dt}\ln|F_{nm}(t)|.
\label{eq:rate_decomposition}
\end{equation}
Intuitively, a rigid translation of $p_t(g)$ corresponds to a mere drift of the reference (a re-labeling
of the group element $g$) with no change in its ``sharpness'': the reference remains equally good and
can only shift the phase of the coherences, leaving $|F_{nm}(t)|$ unchanged (see Appendix \ref{sec:rigidtranslations}). By contrast, non-rigid behavior means that the sampling
profile is reshaped in time---e.g.\ it broadens or is distorted---signaling a change in the quality of
the reference as a symmetry-breaking resource. This may arise due to noise-induced broadening or phase-space shearing under nonlinear Hamiltonians. In this case, such reshaping
changes the relevant Fourier component of $p_t(g)$, so that the modulus $|F_{nm}(t)|$ can vary and the
QRF change alone contributes to the effective decoherence rate, $\gamma_{nm}^{\mathrm{fr}}(t)\neq 0$. In Appendix \ref{app:degenerate_blocks}, we further discuss the extension to degenerate energy levels and explicitly time-dependent QRF transformations.

\subsection{Ramsey interferometry as an operational probe}
\label{subsec:ramsey_protocol}

Ramsey interferometry provides a direct operational probe of the decoherence rate in Eq.~\eqref{eq:decoherence_rate} \cite{Ramsey1950,WallsMilburn2008}. For any two-level subspace $\mathrm{span}\{\ket n,\ket m\}$ of energy eigenstates, a Ramsey sequence prepares a superposition, lets it evolve freely for a waiting time $t$, and maps the resulting coherence back to a population imbalance. Technical details of the protocol are collected in Appendix~\ref{ap:ramasey protocol}. In the effective qubit picture, the Ramsey fringe visibility is
\begin{equation}
\mathcal V_{nm}(t)=2|\rho_{nm}(t)|,
\label{eq:ramsey_visibility}
\end{equation}
so that the time-local decoherence rate is directly obtained from the decay of Ramsey contrast,
\begin{equation}
\gamma_{nm}(t)
=
-\frac{d}{dt}\ln |\rho_{nm}(t)|
=
-\frac{d}{dt}\ln |\mathcal V_{nm}(t)|.
\label{eq:ramsey-rate}
\end{equation}
Ramsey interferometry therefore gives the decoherence rate a direct operational meaning: it is the time-local decay rate of fringe visibility. When the frame factor modifies this visibility in a controlled and measurable way, as in Eq.~\eqref{eq:Fnm_phase_case}, Ramsey interferometry is a natural diagnostic of frame-induced decoherence. 

We now distinguish between two conceptually different scenarios: one in which
the same physical experiment is described from different QRFs, and one in which
local experiments are performed relative to different phase references.
In the first scenario, different observers, Alice and Bob, describe the \emph{same physical Ramsey interferometer} from different QRF perspectives. Assume that Alice, with reference frame $A$, implements a Ramsey experiment on $S$. Let \(M_\pm^{(A)}(t)\) denote the measurement effects associated with the two
Ramsey outcomes in Alice's description, written in the Heisenberg picture so as
to absorb the two \(\pi/2\) pulses, the free evolution, and the final projective
measurement.
Then Bob, describing that same experiment with respect to reference frame $B$, must transform both the state and the measurement under the QRF change,
\begin{equation}
\rho_{SAE}^{(B)}(0)
=
\hat S_{A\to B}\,\rho_{SBE}^{(A)}(0)\,\hat S_{A\to B}^\dagger,
\qquad
\tilde{M_\pm}^{(B)}(t)
=
\hat S_{A\to B}\,M_\pm^{(A)}(t)\,\hat S_{A\to B}^\dagger.
\label{eq:same_interferometer_transformation}
\end{equation}
This is the same operational point emphasized in Ref.~\cite{Giacomini2018}: changing perspective for a given experiment requires transforming both preparation and measurement. By unitarity of $\hat S_{A\to B}$ and cyclicity of the trace, the outcome statistics are unchanged. Hence the Ramsey visibility, and therefore the operational decoherence rate associated with that interferometer, are frame-invariant. Note, however, that the measurement $M_\pm^{(B)}(t)$ no longer corresponds to a local measurement on $S$ but rather a joint operation on both $S$ and $A$ as illustrated in Fig.~\ref{fig:Ramsey1}.

\begin{figure}[H]
    \centering
    \begin{subfigure}{0.4\linewidth}
        \centering
          \includegraphics[width=\linewidth,trim=0 0 0 0,clip]{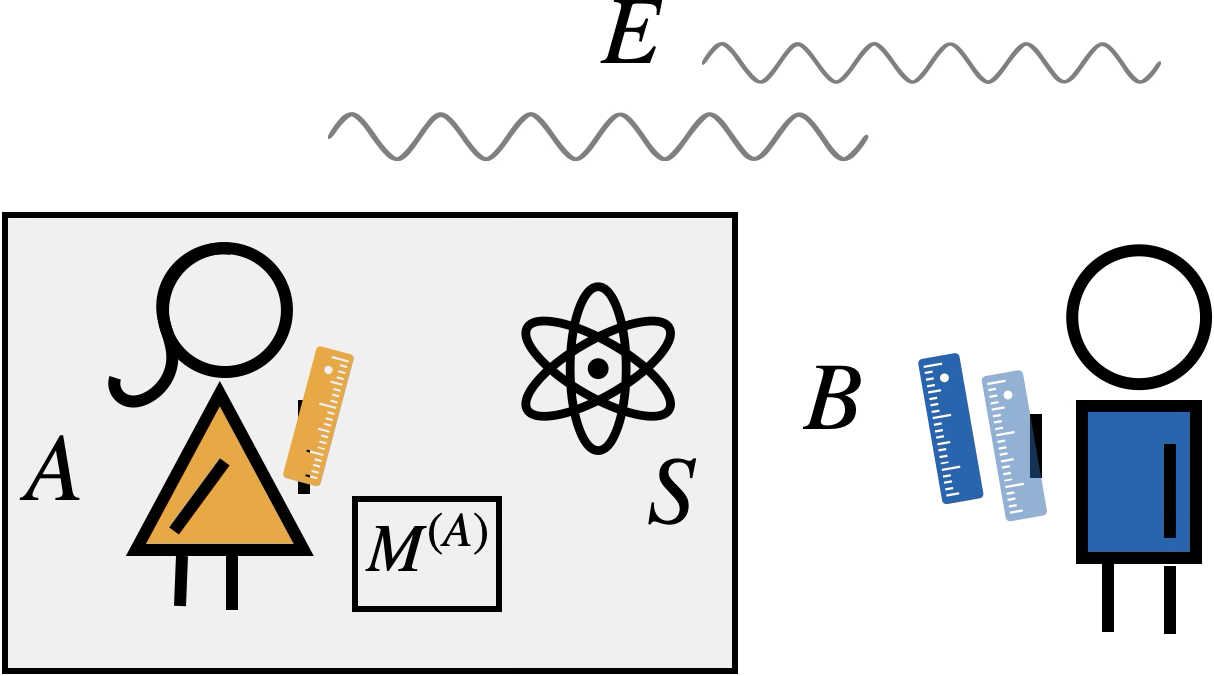}
          \caption{Alice's Perspective}
    \end{subfigure}\hspace{1.5cm}
   \begin{subfigure}{0.4\linewidth}
        \centering
          \includegraphics[width=\linewidth,trim=0 0 0 0,clip]{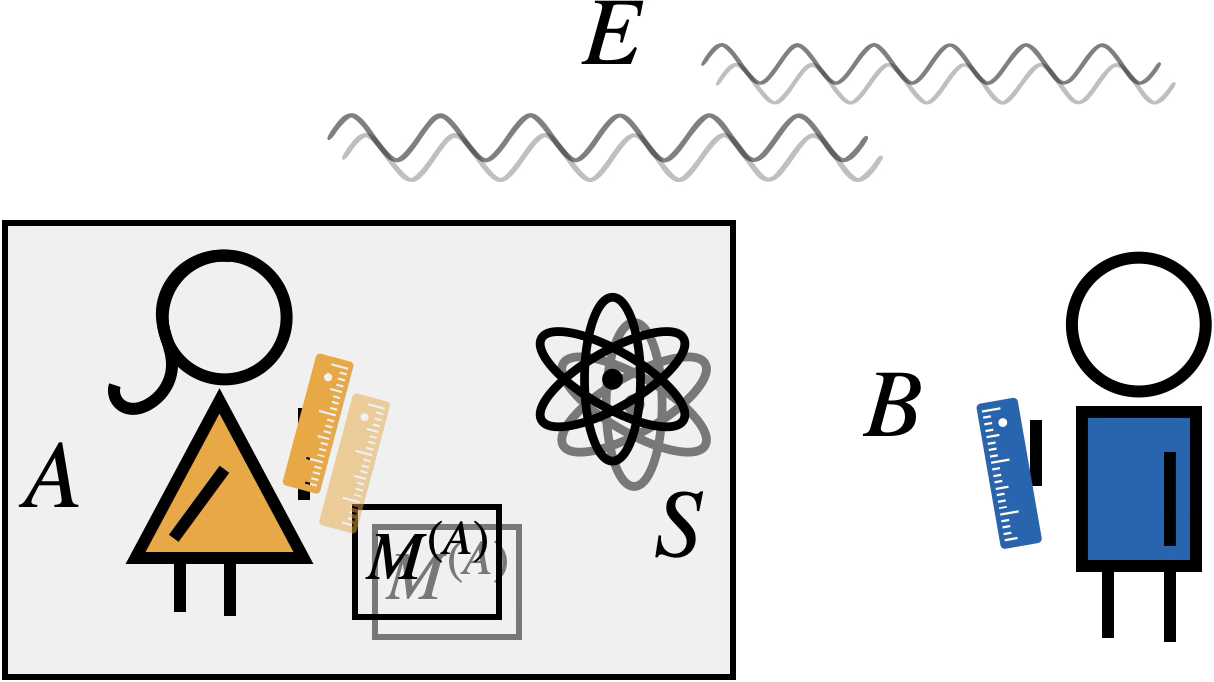}
          \caption{Bob's Perspective}
    \end{subfigure}
      \caption{\emph{Same physical interferometer.} Alice performs a Ramsey experiment $M^{(A)}$ on system $S$. Bob has access to $A$'s phase record and outcomes and describes the same physical interferometer from his point of view. The operational Ramsey visibility (and the associated rate) is frame-invariant: $\gamma_{nm}^{(B\,|\,\mathrm{access}\,A)}=\gamma_{nm}^{(A)}$. The shadowed images represent possible superpositions of relative configurations
between the systems and the reference frame.}
    \label{fig:Ramsey1}
\end{figure}

In the second scenario, illustrated in Fig.~\ref{fig:Ramsey2}, Bob performs a \emph{local} Ramsey experiment relative to his own phase standard and on the reduced state $\rho_S^{(B)}(t)$. This is not the same physical interferometer re-described in another frame, but a different operational procedure tied to a different reference system. In that case, the inferred decoherence rate is the one associated with the reduced state in frame $B$. In the Hamiltonian-symmetric regime, the difference between the locally inferred rates is exactly captured by the frame factor,
\begin{equation}
\gamma_{nm}^{(B)}(t)
=
\gamma_{nm}^{(A)}(t)+\gamma_{nm}^{\mathrm{fr}}(t),
\end{equation}
or, equivalently, at the level of visibilities,
\begin{equation}
\mathcal V_{nm}^{(B)}(t)
=
|F_{nm}(t)|\,\mathcal V_{nm}^{(A)}(t).
\label{eq:visibility_frame_factor}
\end{equation}
Thus, local Ramsey experiments in different frames need not return the same decoherence rate, not because the descriptions are inconsistent, but because they probe coherence relative to different physical phase standards.

\begin{figure}[H]
    \centering
    \begin{subfigure}{0.4\linewidth}
        \centering
          \includegraphics[width=\linewidth,trim=0 0 0 0,clip]{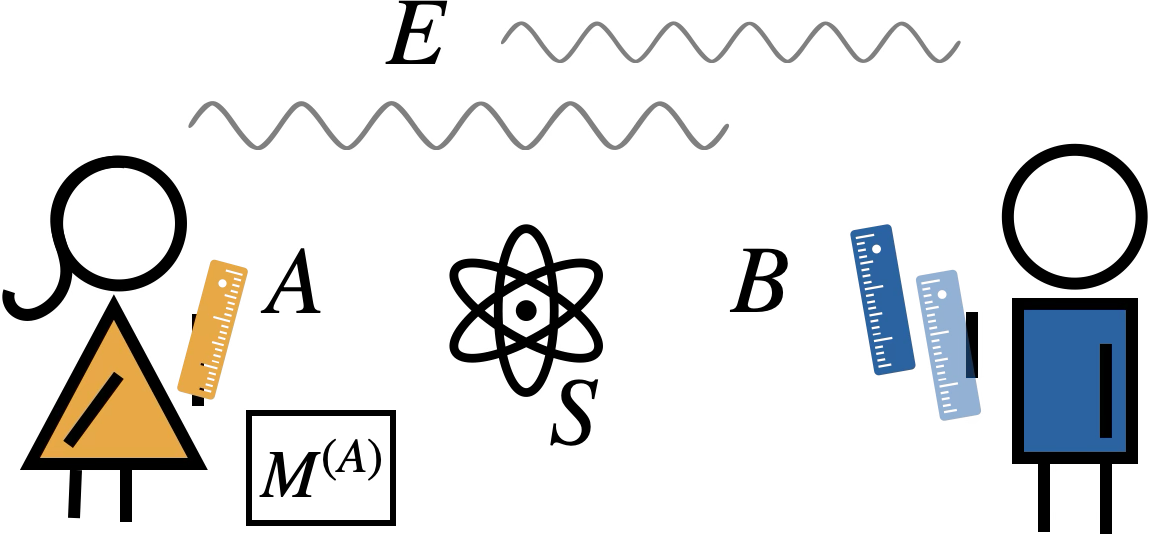}
    \caption{Alice's Perspective}
    \end{subfigure}\hspace{1.5cm}
   \begin{subfigure}{0.4\linewidth}
        \centering
          \includegraphics[width=0.9\linewidth,trim=0 0 0 0,clip]{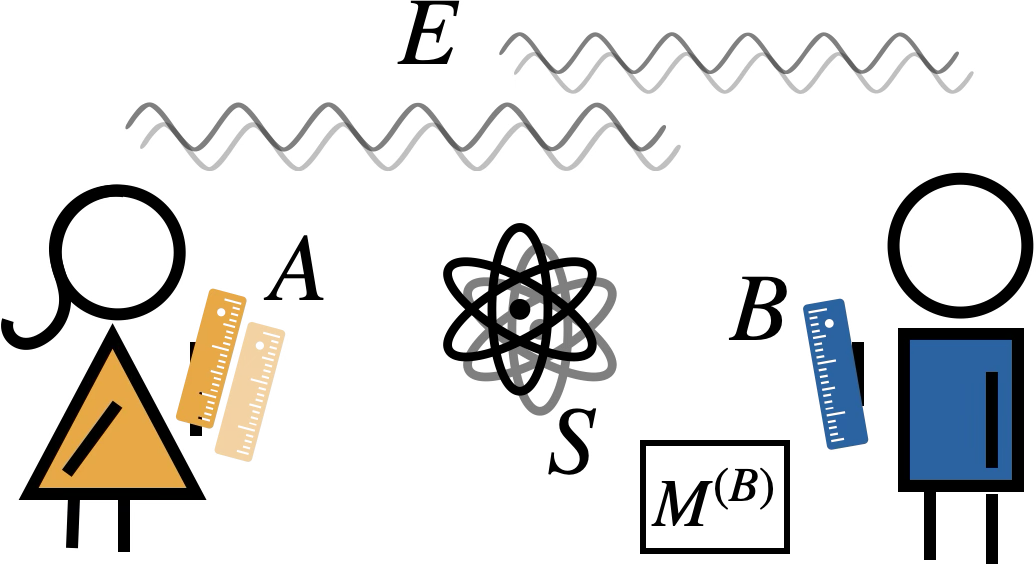}
    \caption{Bob's Perspective}
    \end{subfigure}
    \caption{\emph{Local Ramsey (different phase standards).} Observers Alice and Bob perform Ramsey
  experiments $M^{(A)}$ resp. $M^{(B)}$ using their own phase references $A$ and $B$. The inferred decoherence rates can differ. In the
   Hamiltonian-symmetric regime, the difference is captured by a frame factor, i.e.\ $\mathcal{V}_B(t)=|F_{nm}(t)|\,\mathcal{V}_A(t)$.}
    \label{fig:Ramsey2}
\end{figure}

This is the operational meaning of frame-induced decoherence. A fixed physical interferometer has frame-invariant statistics once both preparation and measurement are transformed consistently across frames. What changes, instead, is the reduced coherence seen by an observer who uses a different local reference to define and interrogate the same system.

\section{Case study: gravity-motivated dephasing with a degrading phase reference} \label{sec:Blencowe}
We now apply the reduced-channel framework to a gravity-motivated pure-dephasing
model in which the two contributions to the locally inferred decoherence rate
identified in Sec.~\ref{sec:Ramsey} can be separated explicitly. Several weak-gravity or gravity-inspired models lead, under suitable approximations and coarse graining, to an effective visibility loss for energy-sensitive superpositions that can be described as pure dephasing in an appropriate basis~\cite{Blencowe2013,Pikovski2015,Bonifacio2009,Balatsky2025}. At the same time, observing phase effects in these scenarios requires a phase/clock standard that is itself a physical system. In \cite{Fahn_2022}, for example, a geometrical clock, built from the gravitational degrees of freedom, was used to describe gravitationally induced decoherence in a relational manner. Treating the reference quantum mechanically raises the question of frame-induced contributions to the locally inferred dephasing, thus providing 
a natural testbed for the preserving criteria developed above. 

We will focus in particular on the scenario where the system
$S$ undergoes energy-basis dephasing in the laboratory frame due to its
coupling to a weak gravitational environment, and the change of QRF is implemented with respect to a phase reference whose state
becomes progressively delocalized over $\mathrm U(1)$. Thus, the locally inferred
decoherence rate in the transformed description acquires an additional
reference-induced contribution that is not generated by any new physical
interaction between $S$ and the reference, but reflects the operational use
of a degrading phase standard.

\paragraph{Physical setup.}
We consider four sets of degrees of freedom in the laboratory frame $A$: the
system $S$, a harmonic oscillator of frequency $\Omega$ with energy eigenstates
$\{\ket n\}$; the environment $E$, modeling a gravitational bath; the
phase reference $B$; and an auxiliary environment $E'$ that degrades the
reference. The total Hamiltonian is
\begin{equation}
H^{(A)}
= H_S^{(A)} + H_E^{(A)} + H_{SE}^{(A)} + H_B^{(A)} + H_{E'}^{(A)} + H_{BE'}^{(A)},
\label{eq:total_hamiltonian_blencowe}
\end{equation}
with no direct $S$--$B$ or $S$--$E'$ coupling. The system Hamiltonian is
$H_S^{(A)} = \hbar\Omega(N_S + 1/2)$ with $N_S = a^\dagger a$
so that the energy eigenvalues are indeed $\hbar\Omega\left(n+\frac{1}{2}\right)$. Following
gravity-inspired dephasing models, we take the $S$--$E$ interaction to be of
energy-coupling form,
\begin{equation}
H_{SE}^{(A)} = H_S^{(A)} \otimes B_E^{(A)},
\qquad
B_E^{(A)} = \sum_i \lambda_i q_i,
\label{eq:HSE_blencowe}
\end{equation}
with $q_i$ the position operators of the environment and $\lambda_i$ the corresponding coupling constants. Such a coupling
can be motivated from the matter--graviton interaction
$S_I \propto \int d^4x\,T_{\mu\nu}h^{\mu\nu}$ after suitable truncations and
coarse graining~\cite{Blencowe2013,XuBlencowe}. Since
$[H_S^{(A)}, H_{SE}^{(A)}] = 0$, the interaction does not exchange energy
between $S$ and $E$, and the reduced dynamics of $S$ in frame $A$ is of
pure-dephasing type.
 
For an initially factorized state $\rho_{SBEE'}^{(A)}(0)=
\rho_S^{(A)}(0)\otimes\rho_B^{(A)}(0)
\otimes\rho_E^{(A)}(0)\otimes\rho_{E'}^{(A)}(0)$ and a stationary Gaussian gravitational
bath,\footnote{Under these assumptions the second-order cumulant expansion is
exact and the dephasing functional takes the Gaussian form below.} the energy-basis coherences of \(S\) in the laboratory frame read
\begin{equation}
\rho_{nm}^{(A)}(t)
= e^{-i\omega_{nm}t}\,e^{-\Gamma_{nm}^{\rm grav}(t)}\,\rho_{nm}^{(A)}(0),
\qquad
\Gamma_{nm}^{\rm grav}(t)
= \Omega^2(n-m)^2\,\chi_{\rm grav}(t),
\label{eq:grav_dephasing_functional}
\end{equation}
where $\omega_{nm} := \Omega(n - m)$ and
$\chi_{\rm grav}(t)$ encapsulates all the model-dependent information about
the gravity-motivated environment (spectral density, temperature, cutoff,
Planck-scale suppression). The characteristic $(n - m)^2$ scaling is
forced by the energy-coupling form of the interaction. The corresponding
laboratory-frame rate is
\begin{equation}
\gamma_{nm}^{(A)}(t)
= -\frac{d}{dt}\ln|\rho_{nm}^{(A)}(t)|
= (n-m)^2\,\Omega^2\,\dot\chi_{\rm grav}(t).
\label{eq:grav_rate_A}
\end{equation}
A simple explicit benchmark is obtained in the long-time, high-temperature
Born--Markov regime of gravitational energy-dephasing models, where the
cutoff-dependent initial transient is neglected
\cite{Blencowe2013,XuBlencowe}. In this regime, the decoherence rate for a
superposition of two stationary energy states with gap \(\Delta E\) takes the form
\begin{equation}
\gamma_{\rm grav}^{\rm BM}(\Delta E)
=
\frac{k_B T_g}{\hbar}
\left(\frac{\Delta E}{E_P}\right)^2 ,
\label{eq:grav_BM_rate}
\end{equation}
where \(T_g\) is the effective temperature of the thermal graviton
environment and \(E_P=\sqrt{\hbar c^5/G}\) is the Planck energy. For the harmonic oscillator
considered here, substituting
\(\Delta E_{nm}=\hbar\Omega(n-m)\) into
Eq.~\eqref{eq:grav_BM_rate} gives the laboratory-frame Born--Markov rate
\begin{equation}
\gamma_{nm}^{(A),{\rm BM}}
=
\gamma_{\rm grav}^{\rm BM}(\Delta E_{nm})
=
(n-m)^2\,\Omega^2\,
\frac{\hbar k_B T_g}{E_P^2}.
\label{eq:grav_rate_A_BM}
\end{equation}

\paragraph{Phase QRF and degrading reference.}
We model $B$ as an ideal QRF for $G = \mathrm U(1)$, with Hilbert space
$\mathcal H_B^{(A)} = L^2(\mathrm U(1), d\theta/2\pi)$, orthogonal orientation
states $\{\ket{\theta}_B\}$, and group action on $S$ generated by the number
operator,
$V_S(\theta) = e^{-i\theta N_S}$, so that $V_S(\theta)\ket n = e^{-in\theta}\ket n$.
The QRF transformation is
\begin{equation}
\hat S_{A\to B}
= \int_0^{2\pi}\frac{d\theta}{2\pi}\,
\ket{2\pi-\theta}_A\!\bra{\theta}_B
\otimes V^{\dagger}_S(\theta)\otimes V^{\dagger}_E(\theta)\otimes \mathbb I_{E'} ,
\label{eq:QRF_U1_blencowe}
\end{equation}
where the group acts trivially on $E'$ for simplicity\footnote{Including a
non-trivial $V_{E'}(\theta)$ would not affect the reduced random-unitary action
on $S$, as long as the total group action factorizes.}. Since
$[V_S(\theta), H_S^{(A)}] = 0$, the transformation is Hamiltonian-symmetric and
Theorem~\ref{thm:gpreserving_preservation} applies, i.e., populations are preserved,
and coherences transform multiplicatively through a frame factor, see Sec. \ref{subsec:frame_factors}.
 
The orientation statistics of the reference are encoded in the diagonal of its
state in the orientation basis. Writing
$\rho_B^{(A)}(t) = \Tr_{E'}[\rho_{BE'}^{(A)}(t)]$, we define the (normalized)
phase distribution

\begin{equation}
p(\theta,t) := \frac{1}{2\pi}\bra{\theta}\rho_B^{(A)}(t)\ket{\theta},
\qquad
\int_0^{2\pi} d\theta\,p(\theta,t) = 1.
\label{eq:p_distribution_blencowe}
\end{equation}
Under the factorized-action assumption of
Sec.~\ref{subsec:Quantumvsclassical}, invariance of the partial trace over $E$
under local unitaries removes $V_E(\theta)$ from the reduced action, yielding
the random-unitary form
\begin{equation}
\rho_S^{(B)}(t)
= \int_0^{2\pi} d\theta\, p(\theta,t)\,V^\dagger_S(\theta)\,\rho_S^{(A)}(t)\,V_S(\theta).
\label{eq:rhoB_random_unitary_blencowe}
\end{equation}
 
We now model the degradation of the reference phenomenologically at the level
of the orientation distribution \(p(\theta,t)\). In particular, we assume that
degradation corresponds to a progressive spreading of \(p(\theta,t)\) on the
circle. In the small-spread regime, where the distribution is sharply localized
compared to \(2\pi\), this can be described by the Gaussian approximation \footnote{Formally, since \(\theta\) is an angular variable, this Gaussian should be
understood as the small-spread approximation to a wrapped distribution on
\(\mathrm U(1)\).}
\begin{equation}
p(\theta, t) =
\frac{1}{\sqrt{2\pi\sigma_\theta^2(t)}}
\exp\!\left[-\frac{(\theta - \theta_0(t))^2}{2\sigma_\theta^2(t)}\right]
\label{eq:p_gaussian_blencowe}
\end{equation}
The essential feature is that
diffusion produces a \emph{non-rigid} deformation of $p(\theta,t)$ on
$\mathrm U(1)$: a rigid unitary drift would only translate the distribution,
leaving the modulus of its Fourier components unchanged
(cf.~Appendix~\ref{sec:rigidtranslations}); broadening, instead, suppresses
these moduli and therefore the frame factors below.

Microscopically, such degradation may arise from the coupling of the reference
\(B\) to the auxiliary degrees of freedom \(E'\). Here we describe the resulting
reduced dynamics phenomenologically at the level of the phase distribution.
To make this effect explicit, we consider two simple laws for the spreading of
\(p(\theta,t)\). The first is diffusion on the group, modeled directly by the
heat equation on \(\mathrm{U}(1)\)~\cite{Risken1996,Gardiner2009},
\[
\partial_t p(\theta,t)
=
D_\theta\,\partial_\theta^2 p(\theta,t),
\]
with periodic boundary conditions. For an initially localized distribution,
the solution is a wrapped Gaussian; in the small-spread regime it is well
approximated by Eq.~\eqref{eq:p_gaussian_blencowe} with
\begin{equation}
\sigma_\theta^2(t)
=
\sigma_\theta^2(0)+2D_\theta t
\equiv
\sigma_\theta^2(0)+\Gamma_B t ,
\label{eq:linear_phase_diffusion}
\end{equation}
where \(\Gamma_B=2D_\theta\) quantifies the phase-diffusion rate of the
reference. The second is a ballistic broadening model,
\begin{equation}
\sigma_\theta^2(t)
=
\sigma_\theta^2(0)+\kappa_B t^2 .
\label{eq:ballistic_phase_spreading}
\end{equation}
This law can be understood as arising from quasi-static frequency fluctuations
of the reference. If, during each experimental run, the reference phase evolves
as $\theta(t)=\theta(0)+\delta\omega\, t$, with \(\delta\omega\) approximately constant during the run but sampled from
run to run from a distribution of variance
\(\mathrm{Var}(\delta\omega)=\kappa_B\), then averaging over the frequency
offsets gives
\[
\sigma_\theta^2(t)
=
\sigma_\theta^2(0)+\mathrm{Var}(\delta\omega)\,t^2 .
\]
The ballistic model can be understood as the analogue, for the reference phase,
of inhomogeneous dephasing due to slowly varying or shot-to-shot frequency
fluctuations~\cite{Ithier2005,Cywinski2008,Bylander2011}.

In both cases, the drift \(\theta_0(t)\) affects only the phase of the
relevant Fourier components, whereas \(\sigma_\theta^2(t)\) controls their
moduli and hence the reference-induced contribution to the locally inferred
decoherence rate.

\paragraph{Frame-induced contribution to the rate.}
Combining Eq.~\eqref{eq:rhoB_random_unitary_blencowe} with
$V_S(\theta)\ket n = e^{-in\theta}\ket n$ leaves populations unchanged and gives
\begin{equation}
\rho_{nm}^{(B)}(t) = F_{nm}(t)\,\rho_{nm}^{(A)}(t),
\qquad
F_{nm}(t) = \int_0^{2\pi} d\theta\, p(\theta, t)\,e^{+i(n-m)\theta},
\label{eq:Fnm_blencowe}
\end{equation}
where $F_{nm}(t)$ is the $(n-m)$-th Fourier component of $p(\theta,t)$. For
the Gaussian distribution~\eqref{eq:p_gaussian_blencowe}, one has
\begin{equation}
|F_{nm}(t)| = \exp\!\left[-\frac{(n-m)^2}{2}\,\sigma_\theta^2(t)\right].
\label{eq:Fnm_modulus_blencowe}
\end{equation}
Substituting into the additive decomposition of
Sec.~\ref{subsec:frame_factors} and using Eq.~\eqref{eq:grav_rate_A}, the
locally inferred decoherence rate in frame $B$ becomes
\begin{equation}
\boxed{
\gamma_{nm}^{(B)}(t)
= \underbrace{(n-m)^2\,\Omega^2\,\dot\chi_{\rm grav}(t)}_{\text{gravity-induced}}
+ \underbrace{\frac{(n-m)^2}{2}\,\frac{d\sigma_\theta^2(t)}{dt}}_{\text{reference-induced}}.
}
\label{eq:grav_reference_rate}
\end{equation}
The laboratory-frame rate
is fixed by the gravity-motivated environment \(E\), while the additional term
is determined entirely by the dynamics of the phase reference \(B\). For a
linearly increasing phase variance,
the reference contribution reduces to a constant additive shift
$\gamma_{nm}^{\rm fr}
=
\frac{(n-m)^2}{2}\Gamma_B $, while a ballistic phase spreading
gives $\gamma_{nm}^{\rm fr}(t)
=
(n-m)^2\kappa_B t $.
Thus, even if the environment-induced rate is constant in frame \(A\), the rate
locally inferred in frame \(B\) can acquire either a constant offset or a
genuinely time-dependent contribution from the degrading phase reference.

\begin{figure}[H]
    \centering
    \begin{subfigure}[b]{0.48\textwidth}
        \centering
        \includegraphics[width=\textwidth]{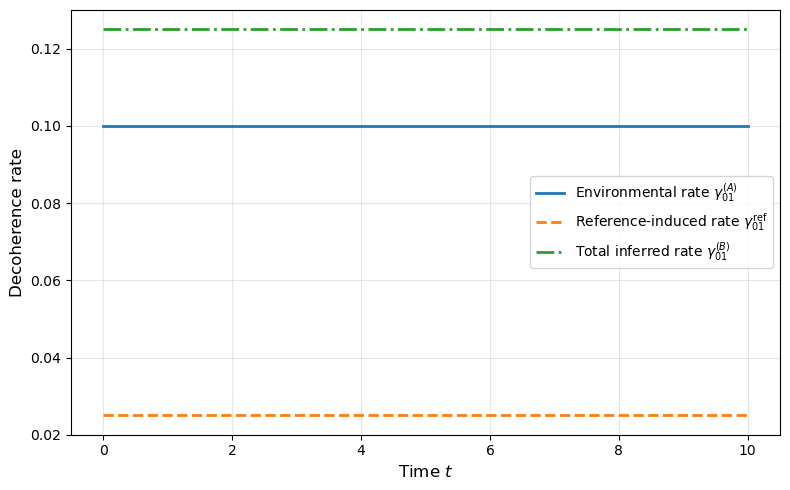}
        \caption{Linear phase diffusion.}
        \label{fig:rate-decomposition-blencowe}
    \end{subfigure}
    \hfill
    \begin{subfigure}[b]{0.48\textwidth}
        \centering
        \includegraphics[width=\textwidth]{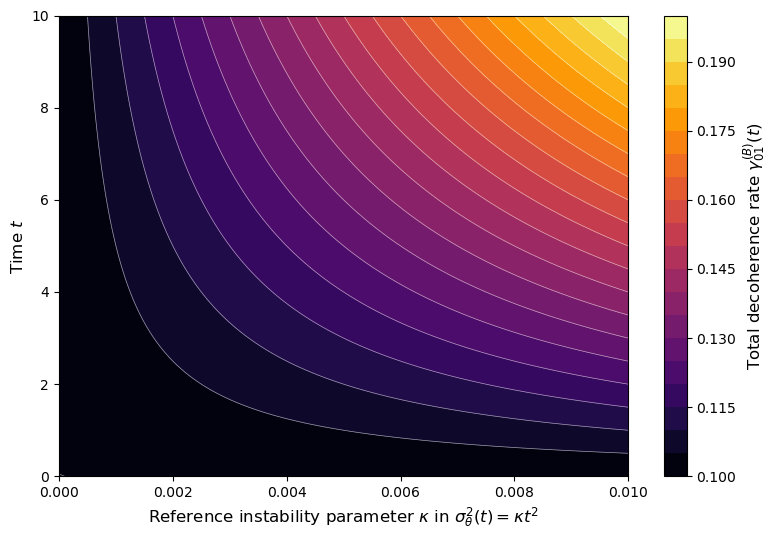}
        \caption{Ballistic phase spreading.}
        \label{fig:contourplot-blencowe}
    \end{subfigure}
    \caption{
    Reference-induced contributions to the locally inferred decoherence rate in
    the gravity-motivated pure-dephasing case study. Left: for linear
    phase-reference degradation, \(\sigma_\theta^2(t)=\Gamma_B t\), the
    reference adds a constant contribution
    \(\gamma_{01}^{\rm fr}=\Gamma_B/2\), so that
    \(\gamma_{01}^{(B)}=\gamma_{01}^{(A)}+\gamma_{01}^{\rm fr}\).
    Right: for ballistic phase spreading,
    \(\sigma_\theta^2(t)=\kappa_B t^2\), the reference-induced contribution is
    time dependent, \(\gamma_{01}^{\rm fr}(t)=\kappa_B t\), and the inferred
    rate grows with both time and the instability parameter \(\kappa_B\).
    }
    \label{fig:decoherence-comparison}
\end{figure}

The decomposition~\eqref{eq:grav_reference_rate} realizes the general
distinction of Sec.~\ref{sec:Ramsey}: the gravitational environment fixes
$\gamma_{nm}^{(A)}(t)$, while degradation of the phase reference contributes
an additional, independently controllable term to the locally inferred rate.
The setting is operationally delicate: in the harmonic-oscillator model, both
contributions scale as $(n-m)^2$, so that a quadratic energy-gap dependence of
Ramsey visibility decay is not, by itself, a unique signature of
gravitational decoherence: a degrading phase reference can mimic, mask, or
renormalize the same scaling. In this sense, the reference-induced contribution plays a role similar to that of competing environmental decoherence mechanisms that must be disentangled from a putative gravitational contribution in realistic interferometric settings~\cite{CarlessoBassi2016}. Here, however, the competing effect does not arise from an additional physical interaction with the system, but from the degradation of the phase reference relative to which coherence is defined and measured.

Disentangling the two effects requires independent
control or calibration of the reference, for example by varying the
phase-diffusion rate $\Gamma_B$, comparing different reference-stabilization
protocols, or exploiting parameters that affect $\chi_{\rm grav}(t)$ but not
$\sigma_\theta^2(t)$, such as the system frequency, the gravitational bath
temperature, or the environmental cutoff. Beyond the Gaussian regime, the
Fourier structure of $F_{nm}(t)$ provides further tools to identify the actual origin of dephasing: departures from
a quadratic-in-$(n-m)$ profile, non-Gaussian time dependence, or coherence
revivals would indicate non-Gaussian reference noise, breakdown of the
diffusion model, or non-classical reference effects beyond the
classical-misalignment regime of Sec.~\ref{subsec:Quantumvsclassical}. The
reduced QRF channel framework therefore shows that Ramsey visibility is not
determined solely by the gravitational environment, but also by the phase
reference relative to which coherence is defined and measured.

\section{Discussion and outlook}
\label{sec:conclusions}
We have developed a reduced-channel description of QRF transformations for open
quantum systems, focusing on the information that remains accessible after the
reference frame and auxiliary degrees of freedom have been discarded. This
construction makes it possible to investigate which features of a reduced open-system
description are robust under a change of QRF, and which ones instead depend on
the physical reference used to define and probe the subsystem. This distinction
is physically relevant because qualitative properties of a reduced dynamics,
such as population preservation, pure-dephasing behavior, or
decoherence rates, are often interpreted as signatures of the underlying
system--environment interaction. Our results show how to separate such
interaction-induced features from effects introduced by the quantum reference
itself.

First, we introduced a hierarchy of preservation conditions for reduced QRF
channels. Block-preservation ensures that populations in the transformed
frame are insensitive to inter-block coherences in the original frame, although
they may still depend on inaccessible reference and environmental degrees of
freedom. Strong block-preservation identifies the stricter case in which each
output block is associated with a unique input block, independently of the state
of the inaccessible degrees of freedom. These notions are purely kinematic:
they are properties of the reduced QRF channel itself and do not yet depend on
the particular open-system dynamics.
When the dynamics in the original frame is of pure-dephasing type, we
showed that kinematic block-preservation alone is not sufficient to guarantee
that the pure-dephasing population structure survives the change of QRF. The
missing ingredient is a dynamical compatibility condition between the reduced
QRF channel and the controlled evolution generating dephasing in the original
frame, which provides a necessary and sufficient condition for population preservation
for the full class of factorized initial preparations considered here. 

We further identified the regimes in which a change of quantum reference frame induces a reduced system transformation that admits a classical interpretation as random frame misalignment. In these regimes, whenever the original reduced dynamics is unital, non-unitality of the transformed reduced dynamics provides a sufficient witness of reduced QRF effects that cannot be explained by classical uncertainty over frame orientations alone.

Finally, in the Hamiltonian-symmetric regime, we showed that the frame
dependence of decoherence can be given a direct operational interpretation. In
this case, the transformed coherences acquire a multiplicative frame factor,
and the locally inferred decoherence rate splits into an environment-induced and a reference-induced contribution. Ramsey interferometry makes this
decomposition observable: the same physical interferometer yields
frame-independent statistics when preparation and measurement are transformed
consistently, whereas local Ramsey experiments tied to different phase
standards can infer different decoherence rates. This illustrates how a
reference-induced loss of visibility can be separated from ordinary
environment-induced decoherence.
We further investigated this point in a gravity-motivated model of decoherence, for which we showed that
both the environmental and the reference-induced
contributions scale quadratically with the energy gap. A measured dephasing rate with this energy-gap dependence is
therefore not, by itself, a unique signature of gravitational decoherence: a
degrading phase reference can mimic the same scaling. Distinguishing the two
contributions requires an independent characterization or calibration of the
reference.

These results point to several directions for future work. A first direction is to develop a resource-oriented view of non-ideal quantum reference frames. Rather than treating non-ideality only as a source of error, one may ask which finite reference resources, limited resolution, correlations with discarded degrees of freedom, or intrinsic reference degradation enable reduced transformations that depart from the classical-misalignment class. In this sense, non-ideal QRFs may enlarge the set of reachable reduced open-system dynamics and provide operational witnesses of genuinely quantum reference-frame behavior. This complements existing work on non-ideal QRFs \cite{delaHamette_2021_perspectiveneutral,Garmier_2025,delaHamette_entropy_2026}, where frame imperfections are known to modify conservation statements under QRF changes.

A second direction is to extend the preservation criteria beyond pure
dephasing. Phase-covariant dynamics provide a natural first step, since they
retain a distinguished coherence--population structure \cite{Chruscinski2022}. Amplitude damping,
thermalization, and general Lindblad dynamics would require identifying the
appropriate invariant objects, such as fixed-point algebras or
decoherence-free subspaces, to be tested under the reduced QRF channel.

A third direction is to study QRF transformations with non-factorizing group
actions, as in relativistic spin--momentum settings \cite{Giacomini2018} or field-theoretic
contexts. In such cases, a change of frame can entangle accessible and
inaccessible degrees of freedom in ways that again fall outside the
classical-misalignment class, leading to a richer family of reduced QRF
channels and to new forms of reference-dependent open-system dynamics.\\

More broadly, changing quantum reference frame and then reducing
to the accessible system can change which noise mechanisms are locally inferred, which quantities are
preserved, and how observed loss of coherence or relaxation should be
attributed between environmental interactions and limitations of the reference
used to define phase, basis, or energy. Reduced QRF channels provide
a systematic way to separate intrinsic features of an interaction from
reference-dependent features of its operational description. This distinction
should be relevant in settings where observers, subsystems, and
observables cannot be separated a priori, from relational approaches to quantum
gravity \cite{Tambornino2011} to observer-dependent descriptions in quantum field theory \cite{Witten_2021,Chandrasekaran_2022,DeVuyst_2024,DeVuyst_2025}.

\begin{acknowledgments}
We would like to thank the QIT group at ETH for useful discussions and feedback.
This work was supported by STSM Grants from COST Action CA23115: Relativistic Quantum Information (RQI), and COST Action CA23130: Bridging high and low energies in search of quantum gravity (BridgeQG). Both actions are funded by COST (European Cooperation in Science and Technology).
V.K. thanks the Swiss National Science Foundation via the NCCR SwissMAP and the ETH Zurich Quantum Center for support.
F.G. acknowledges support from the Swiss National Science Foundation via the Ambizione Grant PZ00P2-208885, from the ETH Zurich Quantum Center, and from the Italian Ministry of University and Research (MUR) under the Funding for Research Projects (FIS 2), pursuant to Ministerial Decree No. 23314 of 11/12/2024 (Project Q-GraSp, FIS-2023-02629). 
This work was made possible through the support of the WOST, WithOut SpaceTime project (https://withoutspacetime.org), led by the Center for Spacetime and the Quantum (CSTQ), and supported by Grant ID\#~63683 from the John Templeton Foundation (JTF). The opinions expressed in this work are those of the author(s) and do not necessarily reflect the views of the John Templeton Foundation or those of the Center for Spacetime and the Quantum. 
\end{acknowledgments}

\section*{Author Contributions}
P.L. and A.S. conceived the project, developing the concrete research questions in discussion with V.K. and F.G.. P.L. carried out the majority of the analytical and computational work and prepared the first draft of the manuscript under the local supervision of V.K.. All authors contributed to the development and interpretation of the results, provided feedback throughout the project, and reviewed and revised the manuscript.

\appendix
\numberwithin{equation}{section}
\renewcommand{\theequation}{\thesection\arabic{equation}}
\section{Proof of Lemma~\ref{lem:pinching}}
\label{App:equivalencepinching}

In this appendix, we prove Lemma \ref{lem:pinching}, which we repeat here for convenience:

\begin{lemma*}[Pinching form of block preservation, rank-one case]
Assume the block structure is non-degenerate (rank-$1$), then $\mathcal Q$ is block-preserving if and only if 
\begin{equation}
\widetilde\Delta^{(B)}_{S}\circ\mathcal Q
=
\widetilde\Delta^{(B)}_{S}\circ\mathcal Q\circ\Delta_{SBE}^{(A)}.
\label{eq:diagonal_pinching_intertwining_app}
\end{equation}
Moreover, if
\begin{equation}
\widetilde\Delta^{(B)}_{S}\circ\mathcal Q
=
\mathcal Q\circ\Delta_{SBE}^{(A)}
\label{eq:full_pinching_intertwining_app}
\end{equation}
then $\mathcal Q$ is block-preserving in the rank-$1$ case.
\end{lemma*}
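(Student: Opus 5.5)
The plan is to pass to the Heisenberg picture and use that both pinching maps are self-dual. For $Y\in\mathcal B(\mathcal H_S^{(B)})$ and $\rho\in\mathcal T(\mathcal H_S^{(B)})$ one has $\Tr[Y\,\widetilde\Delta_S^{(B)}(\rho)]=\Tr[\widetilde\Delta_S^{(B)}(Y)\,\rho]$, and the same holds for $\Delta^{(A)}_{SBE}$, acting by the same formula on bounded operators. Hence Eq.~\eqref{eq:diagonal_pinching_intertwining_app} is equivalent to the dual identity
\begin{equation*}
\mathcal Q^\dagger\circ\widetilde\Delta^{(B)}_S
=
\Delta^{(A)}_{SBE}\circ\mathcal Q^\dagger\circ\widetilde\Delta^{(B)}_S
\end{equation*}
on $\mathcal B(\mathcal H_S^{(B)})$. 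It then suffices to test this identity on the range of $\widetilde\Delta^{(B)}_S$. This is where the rank-$1$ assumption on the output PVM enters. Writing $\widetilde P_n^{(B)}=\ket{n}\!\bra{n}$, we get $\widetilde P_n^{(B)}Y\widetilde P_n^{(B)}=\bra{n}Y\ket{n}\,\widetilde P_n^{(B)}$. Thus $\widetilde\Delta^{(B)}_S(Y)=\sum_n y_n\widetilde P_n^{(B)}$ with bounded coefficients $y_n$, and every $\widetilde P_n^{(B)}$ lies in the range, since it is its own pinching.

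The dual condition is therefore equivalent to $\Delta^{(A)}_{SBE}\big(\mathcal Q^\dagger(\widetilde P_n^{(B)})\big)=\mathcal Q^\dagger(\widetilde P_n^{(B)})$ for every $n$. One direction is immediate by specialising $Y=\widetilde P_n^{(B)}$. For the other direction I would use that $\mathcal Q^\dagger$ is normal, i.e.\ weak$^*$-continuous, being the dual of a channel, and that $\Delta^{(A)}_{SBE}$ is also normal. Then linearity extends the identity from the $\widetilde P_n^{(B)}$ to weak$^*$-convergent sums $\sum_n y_n\widetilde P_n^{(B)}$. In finite dimension this step is trivial, and in the countable case it is the only analytic point to check.

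Next I would characterise the fixed points of $\Delta^{(A)}_{SBE}$, now using the rank-$1$ assumption on the input PVM, $P_m^{(A)}=\ket{m}\!\bra{m}$. An operator $X$ on $SBE$ satisfies $\Delta^{(A)}_{SBE}(X)=X$ iff $X=\sum_m P_m^{(A)}\otimes\Pi_m$, where $\Pi_m:=(\bra{m}\otimes\mathbb I_{BE})X(\ket{m}\otimes\mathbb I_{BE})$ is the partial matrix element. Set $X_n=\mathcal Q^\dagger(\widetilde P_n^{(B)})$ and $\Pi_{nm}^{(A)}$ the corresponding blocks. Since $\mathcal Q^\dagger$ is positive, $X_n\ge0$, and compressions of a positive operator are positive, so $\Pi_{nm}^{(A)}\ge0$. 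Unitality of $\mathcal Q^\dagger$, which is the dual of trace preservation, gives $\sum_nX_n=\mathcal Q^\dagger(\mathbb I)=\mathbb I$; compressing with $\ket m$ then yields $\sum_n\Pi^{(A)}_{nm}=\mathbb I_{BE}$. This is exactly Definition~\ref{def:block_preserving}. Conversely, any operator of the form \eqref{eq:block_preserving_def_only} is manifestly $\Delta^{(A)}_{SBE}$-invariant. Together these give the equivalence.

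For the second statement, I would apply $\widetilde\Delta^{(B)}_S$ to both sides of Eq.~\eqref{eq:full_pinching_intertwining_app} and use idempotence:
\begin{equation*}
\widetilde\Delta^{(B)}_S\circ\mathcal Q=\widetilde\Delta^{(B)}_S\circ\widetilde\Delta^{(B)}_S\circ\mathcal Q=\widetilde\Delta^{(B)}_S\circ\mathcal Q\circ\Delta^{(A)}_{SBE}.
\end{equation*}
This is Eq.~\eqref{eq:diagonal_pinching_intertwining_app}, so the first part gives block-preservation. The step I expect to need most care is the reduction from the full dual identity to the projectors $\widetilde P_n^{(B)}$. It relies on the rank-$1$ identification of the range of $\widetilde\Delta^{(B)}_S$, and in infinite dimension on weak$^*$-continuity. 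In higher rank the range is larger and the argument would only force block-diagonality of $\mathcal Q^\dagger$ on all block-diagonal $Y$, a stronger condition than the definition.
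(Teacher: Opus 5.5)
Your proposal is correct and follows essentially the same route as the paper's proof. You dualize via the self-duality of the pinching maps, reduce to the projectors $\widetilde P_n^{(B)}$ using the rank-one form of the range of $\widetilde\Delta^{(B)}_S$, and read off the product form $\sum_m P_m^{(A)}\otimes\Pi_{nm}^{(A)}$ from rank-one input blocks, with positivity and normalization coming from positivity and unitality of $\mathcal Q^\dagger$; the second claim then follows by idempotence, as in the paper. Your explicit appeal to weak$^*$-continuity of $\mathcal Q^\dagger$ and $\Delta^{(A)}_{SBE}$ makes rigorous the countably infinite case, which the paper handles simply ``by linearity''.
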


Let $\{P_m^{(A)}\}_m$ be a PVM on $\mathcal H_S^{(A)}$ and
$\{\widetilde P_n^{(B)}\}_n$ a PVM on $\mathcal H_S^{(B)}$. Define
\[
\widetilde\Delta_S^{(B)}(Y):=\sum_n \widetilde P_n^{(B)}Y\widetilde P_n^{(B)},
\qquad
\Delta_{SBE}^{(A)}(X):=\sum_m (P_m^{(A)}\!\otimes\!\mathbb I_{BE})\,X\,(P_m^{(A)}\!\otimes\!\mathbb I_{BE}).
\]
Let $\mathcal Q:\mathcal T(\mathcal H_{SBE}^{(A)})\to \mathcal T(\mathcal H_S^{(B)})$ be CPTP, and let $\mathcal Q^\dagger$ be its dual map, defined by
\[
\Tr_S\!\big[Y\,\mathcal Q(X)\big]
=
\Tr_{SBE}\!\big[\mathcal Q^\dagger(Y)\,X\big]
\qquad
\forall X\in\mathcal T(\mathcal H_{SBE}^{(A)}),\ \forall Y\in\mathcal B(\mathcal H_S^{(B)}).
\]

\medskip
\noindent\textbf{Claim.}
Assume that the chosen block structures are non-degenerate (rank-$1$), i.e.\
each $P_m^{(A)}$ and each $\widetilde P_n^{(B)}$ has rank $1$. Then the following are equivalent:
\begin{align}
\widetilde\Delta_S^{(B)}\circ\mathcal Q
&=
\widetilde\Delta_S^{(B)}\circ\mathcal Q\circ\Delta_{SBE}^{(A)},
\label{eq:weak_pinching_intertwining_app_compact}\\
\mathcal Q^\dagger(\widetilde P_n^{(B)})
&=
\sum_m P_m^{(A)}\otimes \Pi_{nm}^{(A)},
\qquad
\Pi_{nm}^{(A)}\ge 0,\qquad
\sum_n \Pi_{nm}^{(A)}=\mathbb I_{BE}^{(A)}\ \ \forall m.
\label{eq:block_alignment_heis_app_compact}
\end{align}

\noindent\emph{Proof.}
Assume \eqref{eq:weak_pinching_intertwining_app_compact}. Using trace duality, together with
\[
\Tr_S[Y\,\widetilde\Delta_S^{(B)}(Z)]
=
\Tr_S[\widetilde\Delta_S^{(B)}(Y)\,Z],
\qquad
\Tr_{SBE}[W\,\Delta_{SBE}^{(A)}(X)]
=
\Tr_{SBE}[\Delta_{SBE}^{(A)}(W)\,X],
\]
we obtain
\[
\mathcal Q^\dagger\circ\widetilde\Delta_S^{(B)}
=
\Delta_{SBE}^{(A)}\circ\mathcal Q^\dagger\circ\widetilde\Delta_S^{(B)}.
\]
Applying this identity to $\widetilde P_n^{(B)}$ gives
\[
\mathcal Q^\dagger(\widetilde P_n^{(B)})
=
\Delta_{SBE}^{(A)}\!\big(\mathcal Q^\dagger(\widetilde P_n^{(B)})\big),
\]
so $\mathcal Q^\dagger(\widetilde P_n^{(B)})$ is block-diagonal with respect to
$\{P_m^{(A)}\otimes\mathbb I_{BE}\}_m$. Since $P_m^{(A)}$ has rank $1$, this yields
\[
\mathcal Q^\dagger(\widetilde P_n^{(B)})=\sum_m P_m^{(A)}\otimes \Pi_{nm}^{(A)}.
\]
Positivity of the $\Pi_{nm}^{(A)}$ follows from complete positivity of $\mathcal Q^\dagger$.
Moreover, since $\mathcal Q$ is trace-preserving, $\mathcal Q^\dagger$ is unital, hence
\[
\mathbb I_{SBE}^{(A)}
=
\mathcal Q^\dagger(\mathbb I_S^{(B)})
=
\sum_n \mathcal Q^\dagger(\widetilde P_n^{(B)})
=
\sum_m P_m^{(A)}\otimes\Big(\sum_n \Pi_{nm}^{(A)}\Big),
\]
which implies $\sum_n\Pi_{nm}^{(A)}=\mathbb I_{BE}^{(A)}$ for every $m$.

Conversely, assume \eqref{eq:block_alignment_heis_app_compact}. Then each
$\mathcal Q^\dagger(\widetilde P_n^{(B)})$ is invariant under $\Delta_{SBE}^{(A)}$.
Since the $\widetilde P_n^{(B)}$ are rank-$1$, every operator in the range of
$\widetilde\Delta_S^{(B)}$ has the form $\widetilde\Delta_S^{(B)}(Y)=\sum_n c_n(Y)\,\widetilde P_n^{(B)}.$
By linearity,
\[
\Delta_{SBE}^{(A)}\circ\mathcal Q^\dagger\circ\widetilde\Delta_S^{(B)}
=
\mathcal Q^\dagger\circ\widetilde\Delta_S^{(B)}.
\]
Passing again to the dual maps gives
\[
\widetilde\Delta_S^{(B)}\circ\mathcal Q
=
\widetilde\Delta_S^{(B)}\circ\mathcal Q\circ\Delta_{SBE}^{(A)},
\]
which is \eqref{eq:weak_pinching_intertwining_app_compact}. \hfill$\square$

\medskip
\noindent\textbf{Corollary.}
If
\begin{equation}
\widetilde\Delta_S^{(B)}\circ\mathcal Q
=
\mathcal Q\circ\Delta_{SBE}^{(A)},
\label{eq:strong_pinching_intertwining_app_compact}
\end{equation}
then $\mathcal Q$ is block-preserving in the sense of
\eqref{eq:block_alignment_heis_app_compact}.

\emph{Proof.}
Compose \eqref{eq:strong_pinching_intertwining_app_compact} on the left with
$\widetilde\Delta_S^{(B)}$ and use idempotence of $\widetilde\Delta_S^{(B)}$ to recover
\eqref{eq:weak_pinching_intertwining_app_compact}; the claim then applies. \hfill$\square$

\medskip
\noindent\emph{Remarks.}
\begin{enumerate}[label=(\roman*)]
\item For degenerate blocks, \eqref{eq:weak_pinching_intertwining_app_compact} still implies block-diagonality with respect to $\{P_m^{(A)}\otimes\mathbb I_{BE}\}_m$, but the factorized form $P_m^{(A)}\otimes\Pi_{nm}^{(A)}$ generally requires an additional blockwise alignment condition.

\item If $\widetilde P_n^{(B)}(t)$, and hence $\widetilde\Delta_S^{(B)}(t)$, depend on time, the equivalence holds pointwise in time:
\[
\widetilde\Delta_S^{(B)}(t)\circ\mathcal Q_t
=
\widetilde\Delta_S^{(B)}(t)\circ\mathcal Q_t\circ\Delta_{SBE}^{(A)}
\quad\Longleftrightarrow\quad
\eqref{eq:block_alignment_heis_app_compact}
\ \text{with }\Pi_{nm}^{(A)}(t).
\]
Additional consistency conditions arise only if one requires a single fixed block structure for all $t$.
\end{enumerate}

\section{Rigid translations on groups and modulus preservation}\label{sec:rigidtranslations}
In this appendix, we show that a rigid drift of the reference distribution on
the group changes only the phase of the frame factor and therefore gives no
reference-induced contribution to the decoherence rate. Throughout the
appendix, \(p_t(g)\) denotes the probability density of the reference
orientation with respect to the left Haar measure \(d\mu(g)\), normalized as $\int_G p_t(g)\, d\mu(g) = 1$.

\begin{definition}[Rigid translation of the reference distribution]\label{def:rigid}
A time-dependent family of probability densities $p_t(g)$ is a \emph{rigid (left) translation} of a fixed density $p_0(g)$
if there exists a (measurable) path $h:[0,\infty)\to G$ such that
\begin{equation}\label{eq:rigid-kernel}
p_t(g)\;=\;p_0\!\big(h(t)^{-1}g\big)\qquad\text{for all }g\in G.
\end{equation}
Equivalently, $p_t=L_{h(t)}p_0$, where $(L_hf)(g):=f(h^{-1}g)$ denotes left translation.
\end{definition}
Due to the left invariance of the Haar measure \cite{Robert1983},
\begin{equation}
    \int_G f(h^{-1}g)\,d\mu(g)
= \int_G f(g')\,d\mu(hg')
= \int_G f(g')\,d\mu(g')
= \int_G f(g)\,d\mu(g)
\end{equation}
for all $h\in G$. Hence \eqref{eq:rigid-kernel} describes a \emph{rigid motion} (no distortion) of the sampling profile on $G$.

\smallskip
We now want to show that under this condition, a rigid translation of the kernel can only change the phase of the frame factor  
\begin{equation}
    F_{nm}(t) = \int d\mu(g)\,p_t(g) e^{+i(\phi_n(g) -\phi_m(g))} := \int d\mu(g) \,p_t(g)\varphi_{nm}(g)
\end{equation}
while preserving its modulus. To do so, we will make use of an important property of $\varphi_{nm}(g)$.

\begin{definition}[Character condition]\label{def:no-shear}
Let $G$ be a (locally compact) group. A function $\varphi:G\to\mathbb C$ is said to satisfy the
\emph{character condition} if there exists a character $\chi:G\to\mathbb C^\times$ such that
\begin{equation}\label{eq:character-eig}
\varphi(hg)\;=\;\chi(h)\,\varphi(g)\qquad\text{for all }h,g\in G.
\end{equation}
\end{definition}
If $G$ is compact and $\chi:G\to\mathbb C^\times$ is a continuous character (with $\mathbb C^\times$ the multiplicative group of nonzero complex numbers), then $|\chi(h)|=1$
for all $h\in G$, i.e.\ $\chi$ is automatically unitary.

\smallskip 
We are working under the assumption that $V_S(g)|n\rangle = e^{-i\phi_n(g)}|n\rangle$. The fact that $V_S(g)$ is a representation of the group implies that $\phi_n(hg) = \phi_n(h) + \phi_n(g)$ and thus
\begin{equation}
    \varphi_{nm}(hg) = e^{+i(\phi_n(h)-\phi_m(h))}\varphi_{nm}(g).
\end{equation}
Hence, $\varphi_{nm}(g)$ satisfies the character condition with unitary character $\chi$. The key observation below shows that a rigid translation can at most add a global phase to $F_{nm}(t)$, given that the coherence function $\varphi_{nm}$ transforms as an eigenfunction of the left-regular action.

\begin{lemma}[Rigid translations preserve the modulus for unitary characters]\label{lem:rigid-preserve-modulus}
Let $G$ be a (locally compact) group with left Haar measure $\mu$.
Let $p_t(g)$ be a rigid translation of $p_0(g):=p(g,0)$ as in \eqref{eq:rigid-kernel}, and define
\[
F(t)\;:=\;\int_G p_t(g)\,\varphi(g)\,d\mu(g).
\]
Assume that $\varphi:G\to\mathbb C$ satisfies the character condition
\eqref{eq:character-eig}, i.e.\ there exists a character
$\chi:G\to\mathbb C^\times$ such that
\[
\varphi(hg)\;=\;\chi(h)\,\varphi(g)\qquad\text{for all }h,g\in G.
\]
Then
\begin{equation}\label{eq:F-chi}
F(t)\;=\;\chi\!\big(h(t)\big)\,\underbrace{\int_G p_0(g)\,\varphi(g)\,d\mu(g)}_{F(0)}.
\end{equation}
If, in addition, $\chi$ is unitary (i.e.\ $|\chi(h)|=1$ for all $h\in G$)---for example,
this holds automatically for continuous characters on compact groups---then
$|F(t)| = |F(0)|$ for all $t$.
\end{lemma}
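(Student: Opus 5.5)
The approach is a direct change of variables that exploits the left invariance of the Haar measure recalled just above. Starting from the definition of $F(t)$ and the rigid-translation hypothesis \eqref{eq:rigid-kernel}, I will write
\[
F(t)=\int_G p_0\big(h(t)^{-1}g\big)\,\varphi(g)\,d\mu(g),
\]
and substitute $g=h(t)g'$. Left invariance, $d\mu(h(t)g')=d\mu(g')$, turns this into
\[
F(t)=\int_G p_0(g')\,\varphi\big(h(t)g'\big)\,d\mu(g').
\]

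Next I apply the character condition \eqref{eq:character-eig} to the integrand, $\varphi(h(t)g')=\chi(h(t))\varphi(g')$. The factor $\chi(h(t))$ does not depend on the integration variable, so it comes out of the integral, and what remains is exactly $F(0)=\int_G p_0\varphi\,d\mu$. This gives \eqref{eq:F-chi}.

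For the modulus statement I take absolute values: $|F(t)|=|\chi(h(t))|\,|F(0)|=|F(0)|$ whenever $\chi$ is unitary. To justify the parenthetical claim about compact groups, I use that for a continuous character the image $\chi(G)$ is a compact subgroup of $\mathbb C^\times$. If some $|\chi(h)|\neq1$, the powers $\chi(h)^k$ would be unbounded or accumulate at $0$, which is impossible in a compact subset of $\mathbb C^\times$.

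The only real obstacle is integrability and well-definedness. I would assume $p_0\in L^1(G,\mu)$ and $\varphi$ bounded and measurable; in the application $|\varphi_{nm}|=1$. Under these assumptions every integral converges absolutely and the change of variables is legitimate. Measurability of the path $h$ plays no role, since the argument holds pointwise in $t$.
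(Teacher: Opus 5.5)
Your proposal is correct and follows essentially the same route as the paper's proof: you substitute $g=h(t)g'$ using left invariance of $\mu$, pull $\chi(h(t))$ out of the integral via the character condition, and take moduli. Your compactness argument for why continuous characters on compact groups are unitary ($\chi(G)$ is compact in $\mathbb C^\times$, so powers $\chi(h)^k$ can neither blow up nor tend to $0$), and your integrability assumptions, fill in points the paper only asserts.
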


\begin{proof}
Using \eqref{eq:rigid-kernel} and the left invariance of $d\mu(g)$ we obtain
\begin{align*}
F(t)
&= \int_G p_t(g)\,\varphi(g)\,d\mu(g)
 = \int_G p_0\big(h(t)^{-1}g\big)\,\varphi(g)\,d\mu(g).
\end{align*}
Perform the change of variables $g' = h(t)^{-1}g$, so that $g = h(t)g'$.
By left invariance, $d\mu(g) = d\mu(h(t)g') = d\mu(g')$, hence
\[
F(t)
= \int_G p_0(g')\,\varphi\big(h(t)g'\big)\,d\mu(g').
\]
Using the character condition \eqref{eq:character-eig} with $h=h(t)$ we get
\[
\varphi\big(h(t)g'\big) = \chi\big(h(t)\big)\,\varphi(g'),
\]
so that
\[
F(t)
= \chi\big(h(t)\big)\int_G p_0(g')\,\varphi(g')\,d\mu(g')
= \chi\big(h(t)\big)\,F(0),
\]
which proves \eqref{eq:F-chi}. If $\chi$ is unitary, then $|\chi(h(t))| = 1$ for all $t$, and
therefore $|F(t)| = |F(0)|$.
\end{proof}
In particular, for compact abelian groups every continuous character is unitary \cite{Rudin1962},
so rigid translations only contribute a global phase and do not affect the modulus of $F(t)$, and hence leave the decoherence rates
$\gamma_{nm}(t)$ invariant.

\section{Extension: Degenerate Blocks and Time-Dependent Reference Distributions}
\label{app:degenerate_blocks}

In this appendix we record a minimal extension of the frame-factor analysis to
degenerate energy blocks. We remain in the Hamiltonian-symmetric,
random-unitary regime considered in the main text, where the reduced action
induced on \(S\) by the QRF change takes the random-unitary form
\begin{equation}
\rho_S^{(B)}(t)
=
\int_G dg\, p_t(g)\,
V^{\dagger}_S(g)\rho_S^{(A)}(t)V_S(g),
\label{eq:app_random_unitary}
\end{equation}
with \(p_t(g)\ge 0\) and \(\int_G dg\,p_t(g)=1\). The time dependence of
\(p_t(g)\) describes the evolution, drift, or degradation of the reference
state, while the QRF transformation itself is kept fixed.

Assume that the group representation preserves the energy blocks,
\begin{equation}
[V_S(g),P_n]=0
\qquad
\forall g\in G,\ \forall n .
\label{eq:app_block_preserving_representation}
\end{equation}
If the spectral subspace \(P_n\mathcal H_S\) has dimension \(d_n>1\), choose an
orthonormal basis \(\{\ket{n,\alpha}\}_{\alpha=1}^{d_n}\) of
\(P_n\mathcal H_S\). Then the representation decomposes as
\begin{equation}
V_S(g)
=
\bigoplus_n V^{(n)}(g),
\label{eq:app_block_diagonal_representation}
\end{equation}
where \(V^{(n)}(g)\) acts on the internal degrees of freedom of the \(n\)-th
energy block.

For \(n\neq m\), the off-diagonal block of the reduced state transforms as
\begin{equation}
\langle n,\alpha|\rho_S^{(B)}(t)|m,\beta\rangle
=
\sum_{\alpha',\beta'}
M^{(n,m)}_{\alpha\alpha',\beta\beta'}(t)\,
\langle n,\alpha'|\rho_S^{(A)}(t)|m,\beta'\rangle,
\label{eq:app_degenerate_coherence_block}
\end{equation}
where
\begin{equation}
M^{(n,m)}_{\alpha\alpha',\beta\beta'}(t)
:=
\int_G dg\, p_t(g)\,
\bigl(V^{(n)}_{\alpha'\alpha}(g)\bigr)^*
V^{(m)}_{\beta'\beta}(g) .
\label{eq:app_degenerate_frame_map}
\end{equation}
Thus, for degenerate energy levels, the scalar frame factor \(F_{nm}(t)\) of
the non-degenerate case is replaced by a linear map \(M^{(n,m)}(t)\) acting on
the coherence block \(P_n\rho_S P_m\). When \(d_n=d_m=1\), this reduces to
\begin{equation}
F_{nm}(t)
=
\int_G dg\, p_t(g)\,
e^{+i[\phi_n(g)-\phi_m(g)]}.
\label{eq:app_scalar_frame_factor}
\end{equation}

When \([V_S(g),P_n]\neq 0\), the frame-factor description breaks down and the
full reduced QRF channel framework of Sec.~\ref{sec:qrf_preserving} is needed.\\

Finally, departing from the fixed-transformation assumption above, one may also
consider a prescribed time dependence in the system part of the QRF
transformation while remaining in the non-degenerate, Hamiltonian-symmetric
regime,
\begin{equation}
V_S(g,t)\ket n
=
e^{-i\phi_n(g,t)}\ket n .
\label{eq:app_time_dependent_representation}
\end{equation}
In this case the scalar frame factor becomes
\begin{equation}
F_{nm}(t)
=
\int_G dg\, p_t(g)\,
e^{+i\Delta\phi_{nm}(g,t)},
\qquad
\Delta\phi_{nm}(g,t)
:=
\phi_n(g,t)-\phi_m(g,t).
\label{eq:app_time_dependent_frame_factor}
\end{equation}
Differentiating gives
\begin{equation}
\dot F_{nm}(t)
=
\int_G dg\,
\left[
\partial_t p_t(g)
+
i\,p_t(g)\,\partial_t\Delta\phi_{nm}(g,t)
\right]
e^{+i\Delta\phi_{nm}(g,t)} .
\label{eq:app_time_dependent_frame_factor_derivative}
\end{equation}
The first term comes from the evolution of the reference-state distribution.
The second term appears only when the transformation rule itself has a
prescribed time dependence. If
\(\partial_t\Delta\phi_{nm}(g,t)\) is independent of \(g\), it changes only the
overall phase of \(F_{nm}(t)\), consistently with the modulus-preservation
property of rigid translations established in
Lemma~\ref{lem:rigid-preserve-modulus}. If it depends on \(g\), it can change
\(|F_{nm}(t)|\) and therefore contribute to the reference-induced decoherence
rate.

\section{Technical details of the Ramsey protocol}\label{ap:ramasey protocol}
This appendix provides technical details on the Ramsey protocol used in 
Sec.~\ref{sec:Ramsey} to operationally access decoherence rates, and clarifies 
how the protocol transforms under QRF changes.

\subsection{Operational estimation of decoherence rates: Ramsey protocol}

The decoherence rate $\gamma_{nm}(t)$, defined in Eq. (\ref{eq:decoherence_rate}), can be estimated
operationally via a Ramsey-type interferometric protocol on any two-level subspace
spanned by energy eigenstates $\{|n\rangle,|m\rangle\}$.

We regard the $\{|n\rangle,|m\rangle\}$ subspace as an effective qubit with Pauli
operators
\begin{align}
  \sigma_x^{(nm)} &= |n\rangle\langle m| + |m\rangle\langle n|, \\
  \sigma_y^{(nm)} &= -i|n\rangle\langle m| + i|m\rangle\langle n|, \\
  \sigma_z^{(nm)} &= |n\rangle\langle n| - |m\rangle\langle m|.
\end{align}
Starting from an energy eigenstate (say $|n\rangle$), a resonant $\pi/2$ pulse prepares the superposition $|\psi_0\rangle = \frac{1}{\sqrt{2}}\big(|n\rangle + e^{i\phi}|m\rangle\big)$, where $\phi$ is a controllable phase. Then the system evolves for a time $t$ under the
  pure-dephasing dynamics $\Phi_t$.
Finally, a second $\pi/2$ pulse with adjustable phase maps coherences back to population differences and is then followed by a population measurement. 
 Equivalently, one may regard
  the measurement as probing the expectation values of
  $\sigma_x^{(nm)}$ and $\sigma_y^{(nm)}$.

The Ramsey signal for a given phase setting is proportional to
\begin{equation}
  \langle \sigma_x^{(nm)}(t)\rangle
  = \Tr\big[\sigma_x^{(nm)} \rho_S(t)\big]
  = 2\,\mathrm{Re}\big[\rho_{nm}(t)\big],
\end{equation}
and similarly $\langle \sigma_y^{(nm)}(t)\rangle$ probes the imaginary part.
By scanning the phase $\varphi$ one reconstructs the complex coherence
$\rho_{nm}(t)$; the Ramsey fringe visibility is
\begin{equation}
  \mathcal{V}_{nm}(t) := \max_\varphi \langle \sigma_x^{(nm)}(t)\rangle
  = 2|\rho_{nm}(t)|.
\end{equation}
Hence, up to an irrelevant constant factor fixed by the initial preparation,
the visibility directly tracks the modulus of the coherence.
The decoherence rate can therefore be expressed in terms of the measured
Ramsey visibility as
\begin{equation}
  \gamma_{nm}(t)
  = -\frac{d}{dt}\ln|\rho_{nm}(t)|
  = -\frac{d}{dt}\ln \mathcal{V}_{nm}(t).
  \label{eq:ramsey-rate2}
\end{equation}
Equation~\eqref{eq:ramsey-rate2} gives $\gamma_{nm}(t)$ a direct operational
meaning as the decay rate of Ramsey fringe contrast. This parallels
Ref.~\cite{XuBlencowe}, where decoherence is extracted from gauge-invariant
interferometric visibility rather than from off-diagonal density-matrix
elements alone. Here Ramsey visibility plays the same role: it yields a
frame-invariant rate for one and the same physical interferometer, while local
Ramsey experiments tied to different phase references access the corresponding
local rates.
\begin{figure}[t]
\centering
\begin{tikzpicture}[scale=0.95,>=stealth]

\tikzset{
  sphere/.style={thin, gray!60},
  axis/.style={thin, gray!70, ->},
  equback/.style={thin, dashed, gray!55},
  equfront/.style={thin, gray!75},
  vec/.style={very thick, blue!65!black, ->},
  arr/.style={thick, ->},
  decoh/.style={thin, dashed, gray!60},
}

\newcommand{\blochsphere}[1]{
  \begin{scope}[xshift=#1 cm]
    \draw[sphere] (0,0) circle (1);
    \draw[equback] (-1,0) arc[start angle=180,end angle=360,x radius=1,y radius=0.32];
    \draw[equfront] (1,0) arc[start angle=0,end angle=180,x radius=1,y radius=0.32];
    \draw[axis] (0,-1.05) -- (0,1.2);          
    \draw[axis] (-1.05,0) -- (1.2,0);          
    \draw[axis] (0.62,0.42) -- (-0.55,-0.38);  
    \node[font=\small] at (0,1.38) {$z$};
    \node[font=\small] at (1.32,0) {$x$};
    \node[font=\small] at (-0.68,-0.5) {$y$};
  \end{scope}
}

\blochsphere{0}
\draw[vec] (0,0) -- (0,0.85);
\node[align=center] at (0,-1.55) {Initial state\\[-1mm] $|n\rangle$};

\draw[arr] (1.55,0) -- (2.55,0);
\node[font=\itshape] at (2.05,0.38) {$U_1$};

\blochsphere{4.1}
\draw[vec] (4.1,0) -- (4.95,0);
\node[align=center] at (4.1,-1.65)
  {Prepared superposition\\[-1mm]
   $\frac{1}{\sqrt{2}}\left(|n\rangle+|m\rangle\right)$};

\draw[arr] (5.65,0) -- (6.65,0);
\node[font=\itshape] at (6.15,0.38) {$U_{\mathrm{free},SE}(t)$};

\blochsphere{8.2}
\draw[decoh] (8.2,0) ellipse (0.6 and 0.19);
\draw[vec] (8.2,0) -- ({8.2+0.6*cos(28)},{0.19*sin(28)+0.30*sin(28)});
\draw[arr] ({8.2+0.62},{0.10}) arc[start angle=10,end angle=55,x radius=0.62,y radius=0.22];
\node[align=center] at (8.2,-1.65)
  {Free evolution\\[-1mm]
   phase accumulation $+$ dephasing};

\draw[arr] (9.75,0) -- (10.75,0);
\node[font=\itshape] at (10.25,0.38) {$U_2(\varphi)$};

\blochsphere{12.3}
\draw[vec] (12.3,0) -- (12.18,0.58);
\draw[arr] (13.35,0.7) -- (13.95,1.1);
\node[align=left] at (15.2,1)
  {Measure $\sigma_z^{(nm)}$};
\node[align=center] at (12.3,-1.65)
  {Readout\\[-1mm] Ramsey fringe};


\end{tikzpicture}
\caption{
Ramsey protocol on the effective qubit \(\{|n\rangle,|m\rangle\}\)
(\(\sigma_z^{(nm)}=|n\rangle\langle n|-|m\rangle\langle m|\)).
The pulses \(U_1\), \(U_2(\varphi)\) and the free evolution
\(U_{\mathrm{free},SE}(t)\) map the coherence \(\rho_{nm}\) onto a measurable
population imbalance; the transverse Bloch-vector length equals the Ramsey
visibility \(\mathcal V_{nm}=2|\rho_{nm}|\), from which the decoherence rate is
obtained.
}
\label{fig:ramsey_protocol_schematic}
\end{figure}

\subsection{Transformation of Ramsey protocol under QRF changes}
We now detail how an agent in frame $B$ describes a Ramsey interferometer 
physically implemented by agent $A$. In this subsection we restrict to the
factorized QRF action
$V_{SE}(g)=V_S(g)\otimes V_E(g)$.
This is the regime relevant for the frame-factor decomposition discussed
above. Since the Ramsey pulses and readout act trivially on \(E\), their
QRF-transformed versions also act trivially on \(E\), although they generally
become joint operations on \(S\) and the reference degrees of freedom.

\paragraph{Setup in frame $A$}
For concreteness, let us make explicit how the $\pi/2$ pulses and measurements implemented by agent
$A$ are described in frame $B$.  Denote by
\(U^{(A)}_{\mathrm{Ramsey}}(t)\) the Ramsey unitary on \(SBE\), corresponding to
the sequence ``\(\pi/2\)--free evolution--\(\pi/2\)'' and understood to act
trivially on the reference \(B\). Let \(\Pi_{S,\pm}\) be the projectors
associated with the outcomes of the final projective measurement of
\(\sigma_z^{(nm)}\) on the effective qubit
\(\{\ket n,\ket m\}\) of \(S\). The corresponding measurement effects, written
in the Heisenberg picture, are
\begin{equation}
  M^{(A)}_{\pm}(t)
  = U^{(A)\dagger}_{\mathrm{Ramsey}}(t)\,
    \big(\Pi^{(S)}_{\pm} \otimes \mathbb{I}_{B}\otimes \mathbb{I}_{E}\big)\,
    U^{(A)}_{\mathrm{Ramsey}}(t),
  \label{eq:Ramsey-POVM-A}
\end{equation}
so that the outcome probabilities in frame $A$ read $p^{(A)}_{\pm}(t)
  = \Tr_{SBE}\!\left[ M^{(A)}_{\pm}(t)\,\rho^{(A)}_{SBE}(0)\right]$.

In the $A$--frame, the two $\pi/2$ pulses appearing in $U^{(A)}_{\mathrm{Ramsey}}(t)$ act locally on
$S$ with respect to $A$'s phase reference. For instance, within the $\{|n\rangle,|m\rangle\}$ subspace
one may describe them as
\begin{equation}
  U^{(A)}_{1}
  = e^{-i\frac{\pi}{4}\sigma^{(nm)}_x} \otimes \mathbb{I}_{B}\otimes \mathbb{I}_{E},\qquad
  U^{(A)}_{2}(\varphi)
  = e^{-i\frac{\pi}{4}\big(\cos\varphi\,\sigma^{(nm)}_x + \sin\varphi\,\sigma^{(nm)}_y\big)}
    \otimes \mathbb{I}_{B}\otimes \mathbb{I}_{E},
\end{equation}
so that $U^{(A)}_{\mathrm{Ramsey}}(t)$ factorizes as $U^{(A)}_{\mathrm{Ramsey}}(t)
  = U^{(A)}_{2}(\varphi)\,U^{(A)}_{\mathrm{free},SE}(t)\,U^{(A)}_{1}$, where \(U^{(A)}_{\mathrm{free},SE}(t)\) acts nontrivially only on \(S\) and \(E\)
and trivially on \(B\). Indeed, $U^{(A)}_{\mathrm{free},SE}(t)$ is the joint \(SE\) unitary inducing the reduced
pure-dephasing channel \(\Phi^{(A)}_t\) on \(S\),
$\Phi^{(A)}_t(\rho_S)
=
\Tr_E\!\left[
U^{(A)}_{\mathrm{free},SE}(t)
(\rho_S\otimes\rho_E)
U^{(A)\dagger}_{\mathrm{free},SE}(t)
\right]$.

\paragraph{Description in frame $B$}
When the same physical setup is described in frame $B$, all operations implemented by $A$ are
conjugated by the QRF transformation $\hat S_{A\to B}$. In particular, the global Ramsey unitary
and the corresponding measurements become
\begin{align}
  U^{(B)}_{\mathrm{Ramsey}}(t)
  &= \hat S_{A\to B}\,U^{(A)}_{\mathrm{Ramsey}}(t)\hat S_{A\to B}^\dagger\, , \label{eq:URamseyC}\\
  M^{(B)}_{\pm}(t)
  &= \hat S_{A\to B}\,M^{(A)}_{\pm}(t)\,\hat S_{A\to B}^\dagger. \label{eq:Ramsey-POVM-C}
\end{align}
Substituting Eq.~\eqref{eq:Ramsey-POVM-A} into Eq.~\eqref{eq:Ramsey-POVM-C}, we find
\begin{equation}
  M^{(B)}_{\pm}(t)
  = U^{(B)\dagger}_{\mathrm{Ramsey}}(t)\,
    \big(\tilde\Pi^{(B)}_{\pm}  \otimes \mathbb{I}_E\big)\,
    U^{(B)}_{\mathrm{Ramsey}}(t),
\end{equation}
where
\begin{equation}
  \tilde\Pi^{(B)}_{\pm}
  := \hat S_{A\to B}\,
     \big(\Pi_{S,\pm}\otimes \mathbb{I}_B\otimes \mathbb{I}_E\big)\,
     \hat S_{A\to B}^\dagger
\end{equation}
are projectors on the joint $S+A$ system. 
Thus, from the viewpoint of an observer in frame $B$,
the $\pi/2$ pulses and readout that are \emph{local} operations on $S$ in frame $A$ are in general
described as \emph{joint} unitaries and measurements on $S$ and the reference $A$.
However, by cyclicity of the trace and unitarity of $\hat{S}_{A\to B}$, the 
outcome probabilities are identical:
\begin{equation}
  p^{(B)}_{\pm}(t)
  = \Tr_{SAE}\!\left[ M^{(B)}_{\pm}(t)\,\rho^{(B)}_{SAE}(0)\right]
  = \Tr_{SEB}\!\left[ M^{(A)}_{\pm}(t)\,\rho^{(A)}_{SEB}(0)\right]
  = p^{(A)}_{\pm}(t).
\end{equation}
 Consequently, the Ramsey visibility and
the associated operational decoherence rate of $A$'s interferometer,
$\mathcal{V}^{(\text{Ramsey }A)}_{nm}(t)$ and $\gamma^{(\text{Ramsey }A)}_{nm}(t)$, are frame-invariant, even
though the individual $\pi/2$ pulses and measurements appear differently in the two descriptions.

We finally note that the relation
\(
\gamma_{nm}^{(B)}(t)=\gamma_{nm}^{(A)}(t) - \frac{d}{dt}\ln|F_{nm}(t)|
\)
characterizes the free dephasing segment of the Ramsey sequence, during which
$S$ interacts with $E$ but not directly with the reference $B$, as well as that
the control pulses do not significantly disturb the reference degrees of freedom
entering $F_{nm}(t)$, e.g.\ because the reference is macroscopic or otherwise
robust against the applied controls.

\bibliographystyle{quantum}
\bibliography{bibliography}
\end{document}